\documentclass[11pt]{article}

\usepackage{amsmath,amsfonts,amsthm}
\usepackage{booktabs}
\usepackage{hyperref}
\usepackage{fullpage}
\usepackage[numbers,sort&compress]{natbib}

\theoremstyle{plain}
\newtheorem{theorem}{Theorem}[section]
\newtheorem{lemma}[theorem]{Lemma}

\theoremstyle{definition}
\newtheorem{definition}[theorem]{Definition}

\newtheorem{remark}[theorem]{Remark}

\newcommand{\R}{\mathbb{R}}
\newcommand{\LW}{\mathrm{LW}}
\newcommand{\OPT}{\mathrm{OPT}}
\newcommand{\ALG}{\mathrm{ALG}}
\newcommand{\PoA}{\mathrm{PoA}}
\newcommand{\SPA}{\mathrm{SPA}}
\newcommand{\FPA}{\mathrm{FPA}}
\newcommand{\rFPA}{\mathrm{rFPA}}
\newcommand{\rTruth}{\mathrm{rTruth}}
\newcommand{\pFPA}{\mathrm{pFPA}}

\begin{document}

\title{Efficiency of Generalized Proportional First-Price Auctions Under Auto-bidding\thanks{The main result was entirely obtained by Cogentic, an agentic framework for mathematical discovery, using an internal version of Gemini as the base model. The authors contextualized the findings and verified the proofs. The full exposition here is due to the authors aided by different AI models.

The following authors have additional affiliations beyond Google Research: Yang Cai (Yale University) and Vineet Gupta (Google DeepMind).}}

\author{
  Yang Cai, Vineet Gupta, Yanchen Jiang, Christopher Liaw \\[0.5ex]
  Aranyak Mehta, Grigoris Velegkas, and Di Wang \\[1.5ex]
  Google Research \\
  \small\texttt{\{caiy, vineet, yanchenjiang, cvliaw, aranyak, gvelegkas, wadi\}@google.com}
}
\date{}

\maketitle

\begin{abstract}
Auto-bidding is now widely adopted in online advertising platforms, allowing advertisers to specify high-level campaign objectives—such as maximizing total value subject to a return-on-spend (ROS) constraint—rather than manual per-query bids. A central question in algorithmic mechanism design is characterizing the worst-case efficiency loss, or Price of Anarchy (PoA), across auction formats in this prior-free setting. While randomized auctions are known to strictly improve efficiency over deterministic mechanisms for two bidders, two fundamental questions have remained open: (1) what is the optimal PoA for two bidders, and (2) can any mechanism beat the barrier of 2 for general $n \ge 3$ bidders?

We resolve both questions using the family of $r$-proportional first-price auctions ($\text{pFPA}_r$), in which each bidder wins with probability proportional to their bid raised to an exponent $r > 0$ and pays their bid upon winning. First, for two bidders, we prove that the standard proportional first-price auction ($r = 1$) achieves a tight $\text{PoA} \le 1.5$, complemented by a matching lower bound showing that no anonymous, monotone mechanism can do better. Second, for general $n \ge 2$ bidders, setting $r = 2n$ achieves $\text{PoA} \le 2 - \frac{1}{4n+1} = 2 - \Omega(1/n)$ across all undominated bid profiles, breaking the deterministic barrier of 2 for every finite $n$ and asymptotically matching the known $2 - \Theta(1/n)$ lower bound.
\end{abstract}

\section{Introduction}
\label{sec:intro}

Automated bidding (auto-bidding) has become the predominant interface for advertisers participating in online advertising markets, including sponsored search and display advertising \cite{google2022autobidding, meta2022bid}. Rather than manually calibrating fine-grained bids for individual ad opportunities (queries), advertisers specify high-level campaign objectives and portfolio-wide constraints. An automated bidding agent then translates these aggregate goals into real-time, per-query bids. A foundational and widely deployed auto-bidding paradigm is \emph{value maximization subject to a return-on-spend (ROS) constraint} (also referred to as a return-on-investment or ROI constraint), in which an advertiser seeks to maximize the total value of won impressions subject to keeping their average spend per unit of acquired value below a prescribed target threshold \cite{aggarwal2019autobidding, deng2021towards, balseiro2021landscape}.

Because auto-bidders couple decisions across all queries through their global ROS constraints and aim to maximize total value rather than classical quasilinear utility, the equilibrium behavior and welfare guarantees of standard auctions change fundamentally in the auto-bidding world. In their seminal work, Aggarwal et al.~\cite{aggarwal2019autobidding} formulated the mathematical model of auto-bidding and evaluated system efficiency via the \emph{liquid welfare} \cite{dobzinski2014efficiency}---the aggregate target-weighted value of an allocation. They proved that when the underlying auction is truthful (such as the second-price auction, $\SPA$), uniform bidding (bidding a fixed multiplier of one's value across all queries) is a dominant strategy, and the resulting Price of Anarchy ($\PoA$) of $\SPA$ is exactly $2$.

Subsequent research investigated whether leaving the class of deterministic truthful mechanisms can beat the $\PoA$ barrier of $2$ in the prior-free model (where the mechanism observes only bids and has no access to private values). Liaw, Mehta, and Perlroth~\cite{liaw2023efficiency} also a proved dominance theorem for deterministic auctions: any uniform-bidding equilibrium of $\SPA$ with tight ROS constraints can be replicated in any reasonable deterministic auction (such as the first-price auction, $\FPA$), and consequently \emph{no} deterministic auction can achieve a $\PoA$ better than $2$ (while $\FPA$ with general non-uniform bidding matches the $\PoA$ of $2$).

When randomization is permitted, however, the landscape changes dramatically. For the two-bidder setting, Mehta~\cite{mehta2022auction} designed a randomized truthful auction ($\rTruth$) that achieves a $\PoA$ of approximately $1.89$. Liaw, Mehta, and Perlroth~\cite{liaw2023efficiency} subsequently demonstrated that \emph{non-truthful} (first-price) payments amplify the benefits of randomization: by introducing a threshold-based randomized first-price auction ($\rFPA(\alpha)$), they improved the two-bidder $\PoA$ to $1.8$ (at $\alpha = 1.4$). At the same time, Liaw, Mehta, and Perlroth~\cite[Theorem~6.2]{liaw2023efficiency} proved a lower bound showing that under mild assumptions, for any $n = 2k$ bidders, any anonymous mechanism has $\PoA \ge \bigl(\frac{k+2}{2(k+1)}\bigr)^{-1} = \frac{2(k+1)}{k+2} = \frac{2n+4}{n+4} = 2 - \frac{4}{n+4}$. Yet two fundamental questions remained unresolved:
\begin{quote}
\emph{(1) For $n = 2$ bidders, what is the optimal Price of Anarchy across prior-free mechanisms? \\
(2) For general $n \ge 3$ bidders, does there exist a mechanism that breaks the deterministic $\PoA$ barrier of $2$ and matches the lower bound rate of $2 - \Theta(1/n)$?}
\end{quote}

\subsection{Our Results}
\label{subsec:results}

In this paper, we resolve both open questions using the family of \emph{$r$-Proportional First-Price Auctions} ($\pFPA_r$). Given non-negative bids $\mathbf{b}_j = (b_{1,j}, \dots, b_{n,j}) \in \R_{\ge 0}^n$ on query $j$, each bidder $i \in [n]$ wins query $j$ with probability proportional to $b_{i,j}^r$,
\[
  x_{i,j}(\mathbf{b}_j) = \frac{b_{i,j}^r}{\sum_{k=1}^n b_{k,j}^r},
\]
and pays their bid $b_{i,j}$ conditional on winning (and $0$ if uncontested), resulting in an expected payment of $p_{i,j}(\mathbf{b}_j) = b_{i,j} x_{i,j}(\mathbf{b}_j)$ on contested queries.

\begin{table}[t]
\centering
\caption{Summary of previous and new Price of Anarchy ($\PoA$) bounds in the prior-free auto-bidding setting.}
\label{tab:summary}
\begin{tabular}{lll}
\toprule
\textbf{Auction Setting} & \textbf{PoA} & \textbf{Reference} \\
\midrule
Second-Price Auction, $\SPA$ ($n \ge 2$) & $= 2$ & \cite{aggarwal2019autobidding} \\
First-Price Auction, $\FPA$ ($n \ge 2$) & $= 2$ & \cite{liaw2023efficiency, deng2022efficiency} \\
Randomized Truthful, $\rTruth$ ($n = 2$) & $\le 1.89$ & \cite{mehta2022auction} \\
Randomized Threshold FPA, $\rFPA(\alpha=1.4)$ ($n = 2$) & $\le 1.8$ & \cite{liaw2023efficiency} \\
\textbf{Proportional $\pFPA_1$ ($n = 2$)} & $\mathbf{= 1.5}$ & \textbf{Theorem~\ref{thm:main}} \\
\textbf{Two-Bidder Lower Bound ($n = 2$)} & $\mathbf{\ge 1.5}$ & \textbf{Theorem~\ref{thm:lower_bound}} \\
\textbf{$2n$-Proportional $\pFPA_{2n}$ ($n \ge 2$)} & $\mathbf{\le 2 - \frac{1}{4n+1}}$ & \textbf{Theorem~\ref{thm:n_bidders}} \\
Multi-Bidder Lower Bound ($n = 2k$) & $\ge 2 - \frac{4}{n+4}$ & \cite[Theorem~6.2]{liaw2023efficiency} \\
\bottomrule
\end{tabular}
\end{table}

As summarized in Table~\ref{tab:summary}, we establish three main results.

\paragraph{Two-Bidder Upper Bound (Section~\ref{sec:upper_bound}).}
Our first main result establishes that the standard two-bidder Proportional First-Price Auction ($\pFPA_1$, with $n = 2$ and $r = 1$) achieves a Price of Anarchy of at most $1.5$.

\begin{theorem}[Formal version in Theorem~\ref{thm:main}]
\label{thm:intro_upper}
In the two-bidder Proportional First-Price Auction ($\pFPA_1$), the Price of Anarchy is at most $1.5$.
\end{theorem}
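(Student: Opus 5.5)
We work in the standard auto-bidding model: bidder $i$ has value $v_{i,j}$ on query $j$ (targets normalized to $1$), maximizes $\sum_j x_{i,j} v_{i,j}$ subject to $\sum_j p_{i,j} \le \sum_j x_{i,j} v_{i,j}$, and $\LW(\mathbf{x}) = \sum_{i,j} x_{i,j} v_{i,j}$. The plan is the usual two-part argument for ROS PoA bounds: a per-query deviation inequality, summed and combined with the ROS constraints.

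\textbf{Step 1: undominated bids.} Since a bidder only cares about value, raising a bid increases $x_{i,j}$. So at equilibrium each bidder bids on the queries with the best marginal value-per-spend. Concretely, I expect undominated or equilibrium bids to satisfy $b_{i,j} \ge$ some function of $v_{i,j}$; in particular bidding below value is never optimal when the ROS constraint is slack. More importantly, by a Lagrangian/KKT argument each bidder $i$ has a multiplier $\lambda_i$ such that, on every query with $b_{i,j}>0$,
\[
\partial_b\bigl(x v_{i,j}\bigr) = \lambda_i\, \partial_b (b\,x),
\]
and the ROS constraint is tight whenever $\lambda_i>0$. For $\pFPA_1$ with opponent bid $c$ we have $x = b/(b+c)$, so $\partial_b x = c/(b+c)^2$ and $\partial_b(bx) = b(b+2c)/(b+c)^2$. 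This gives the clean relation
\[
v_{i,j}\, c = \lambda_i\, b(b+2c).
\]

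\textbf{Step 2: split by the optimum.} Let $\OPT$ assign query $j$ to bidder $o(j)$. Write $\OPT = \sum_j v_{o(j),j}$ and split it into queries where $o(j)$ already receives probability $x_{o(j),j}$ and the missing mass $(1-x_{o(j),j})v_{o(j),j}$. Using the first-order relation, I bound the missing mass on query $j$ by the opponent's payment there plus a fraction of the optimal bidder's own value term. The target is a per-query inequality of the form
\[
v_{o,j} \le \tfrac{3}{2}\bigl(x_{o,j} v_{o,j} + \theta\, p_{-o,j}\bigr) + (\text{terms that telescope via } \lambda_o),
\]
after which I sum over $j$ and use the ROS constraint $\sum_j p_{i,j} \le \sum_j x_{i,j} v_{i,j}$ to convert payments into liquid welfare. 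Bidder-wise scaling by $\lambda_i$ should be handled by a convexity argument: for fixed $\lambda_i$, the worst case reduces to a two-variable optimization in the ratio $t = b/c$.

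\textbf{Main obstacle.} The hard part is the per-query inequality. Each bidder's multiplier is global, so the two bidders' $\lambda$'s interact across queries. I would first reduce to the single-ratio problem: maximize $\OPT/\LW$ over instances where each query is described by $(t_j, \lambda_1, \lambda_2)$. I would then show that it suffices to consider at most two query types, by an LP-extreme-point argument over the query distribution. After that I would solve the resulting finite-dimensional optimization, expecting the extremal case $t = 1/2$ or similar to give exactly $1.5$. If the KKT route is too loose for non-equilibrium undominated profiles, I would instead use only the weaker fact $b_{i,j} \ge v_{i,j}/\lambda_i$ type bounds together with payments $b x$ directly.
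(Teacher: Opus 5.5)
\textbf{Genuine gap.} Your Step~1 is the right starting point and matches the paper's Lemmas~\ref{lem:binding_ros}--\ref{lem:foc}:
\begin{itemize}
\item your $\lambda_i$ is the paper's normalized multiplier $\mu_i\in[0,1)$;
\item $v_{i,j}c=\lambda_i b(b+2c)$ is its first-order condition on contested queries;
\item complementary slackness $\mu_i(V_i-S_i)=0$ holds.
\end{itemize}
KKT still needs justification. The paper reparametrizes contested queries by $x\in[0,1)$, where the payment $c\,x^2/(1-x)$ is convex, and invokes Slater. On uncontested queries you also need that a bidder with positive value bids positively, so the other bidder's value there is $0$.

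The gap is Step~2, which is the whole proof. No per-query inequality is stated or proved, and the shape you propose cannot deliver $1.5$.
\begin{itemize}
\item With $\theta=0$ it is false. At $b_1=b_2=1$, $\mu_1=\mu_2=\frac13$, the first-order condition gives $v_1=v_2=1$, $x_i=p_i=\frac12$ and $\Delta_i=0$. Any surplus terms vanish, so you would need $1\le\frac34$.
\item With $\theta>0$ the summation step fails. The ROS constraint is only aggregate, so $\sum_j p_{-o(j),j}$ cannot be charged to the value the suboptimal bidder gets on those queries. Take the equilibrium of Theorem~\ref{thm:lower_bound} instantiated for $\pFPA_1$ (there $v=B$, $\pi^*=\frac12$, $\pi_0=1$). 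It has $\sum_j x_{o(j),j}v_{o(j),j}=\LW(\ALG)-\epsilon/2$ and $\sum_j p_{-o(j),j}=\frac12\LW(\ALG)$. For small $\epsilon$ your summed right-hand side therefore exceeds $\frac32\LW(\ALG)$.
\end{itemize}

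What works is the paper's Lemma~\ref{lem:pointwise}, which can be written equivalently as
\[
\max(v_{1,j},v_{2,j})\le x_{1,j}v_{1,j}+x_{2,j}v_{2,j}+\tfrac12(p_{1,j}+p_{2,j})+3\mu_1\Delta_{1,j}+3\mu_2\Delta_{2,j},
\]
where $\Delta_{i,j}=x_{i,j}v_{i,j}-p_{i,j}$. Both bidders' values carry coefficient $1$, both payments carry coefficient $\frac12$, and the multipliers appear only as weights on the surpluses. Summing over queries gives $\LW(\OPT)\le\LW(\ALG)+\frac12(S_1+S_2)+3\sum_i\mu_i(V_i-S_i)\le\frac32\LW(\ALG)$.

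This inequality holds for \emph{every} pair $\mu_1,\mu_2\ge 0$. So the cross-query coupling of multipliers that you single out as the main obstacle never has to be confronted: complementary slackness alone kills the multiplier terms after summation. No reduction to finitely many query types is needed (your two-type extreme-point claim is unsupported anyway), and no finite-dimensional worst-case search either.

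The proof of the inequality is a direct computation:
\begin{enumerate}
\item Set $z=b_1/b_2$ and $b_2=1$.
\item Substitute $v_1=\mu_1z(z+2)$ and $v_2=\mu_2(2z+1)/z$.
\item Multiply the gap by $z(z+1)$.
\item Complete the square separately in $\mu_1$ and in $\mu_2$, which gives the lower bound $\frac{7}{24}z(z-1)^2\ge 0$.
\end{enumerate}
Equality holds at $z=1$, $\mu_1=\frac13$, $\mu_2=0$, which is the tie configuration of the lower-bound instance, not $t=\frac12$.

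Your fallback would not help either. A bound of the form $b\ge v/\lambda$ goes the wrong way: the first-order condition gives $v_{i,j}=\lambda_i b(b/c+2)\ge 2\lambda_i b$, which is an upper bound on $b$. The lower bounds that undominatedness does provide ($b\ge\frac{1-x}{2-x}v$ for $r=1$, Lemma~\ref{lem:n_undom}) only yield, through the argument of Section~\ref{sec:n_bidders} with $n=2$, $r=1$, the bound $1+1/c$ with $c^2+c-1=0$, about $2.62$. The theorem concerns equilibria, and the equilibrium first-order condition together with complementary slackness is exactly what makes $1.5$ reachable.
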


To prove Theorem~\ref{thm:intro_upper}, we start by writing each bidder's value-maximization problem subject to their return-on-spend constraint and analyzing its Lagrangian dual (Lemma~\ref{lem:binding_ros}). This decouples the problem across queries into an unconstrained surrogate objective of the form $\sum_{j \in Q} \bigl(x_{i,j}(b_j) v_{i,j} - \mu_i p_{i,j}(b_j)\bigr)$, where $\mu_i = \frac{\lambda_i}{1 + \lambda_i} \in [0, 1)$ is the dual variable corresponding to bidder $i$'s global return-on-spend constraint. Note that both the allocation probability $x_{i,j}(b_j)$ and expected payment $p_{i,j}(b_j)$ depend on the submitted bids. The first-order optimality condition of this surrogate objective yields an explicit constraint relating values, bids, and the dual multipliers: $v_{i,j} = \mu_i b_{i,j} \bigl(\frac{b_{i,j}}{b_{-i,j}} + 2\bigr)$ (Lemma~\ref{lem:foc}). Writing $\Delta_{i,j} = x_{i,j} v_{i,j} - p_{i,j}$ for bidder $i$'s expected net surplus (query slack) on query $j$, we use this relation to derive a pointwise inequality for every query $j \in Q$ and any $\mu_1, \mu_2 \ge 0$ (Lemma~\ref{lem:pointwise}):
\[
\max(v_{1,j}, v_{2,j}) \le 1.5 (p_{1,j} + p_{2,j}) + (1 + 3\mu_1)\Delta_{1,j} + (1 + 3\mu_2)\Delta_{2,j}.
\]
Finally, complementary slackness guarantees that $\mu_i \sum_{j \in Q} \Delta_{i,j} = 0$ (Lemma~\ref{lem:binding_ros}), where $\mu_i$ is the dual variable corresponding to the overall constraint and $\sum_{j \in Q} \Delta_{i,j} = V_i - S_i$, where $V_i, S_i$ are the value and spend respectively, represents bidder $i$'s aggregate slack. Summing the pointwise inequality over all queries $j \in Q$ and applying $\mu_i \sum_{j \in Q} \Delta_{i,j} = 0$ together with the ROS feasibility condition $S_i \le V_i$ yields $\LW(\OPT) \le \sum_{i=1}^2 \bigl(V_i + \frac{1}{2}S_i\bigr) \le 1.5\,\LW(\ALG)$, establishing the claim.

\paragraph{Two-Bidder Lower Bound (Section~\ref{sec:lower_bound}).}
Our second main result shows that our $1.5$ upper bound for $\pFPA_1$ cannot be improved by any two-bidder mechanism satisfying mild assumptions.

\begin{theorem}[Formal version in Theorem~\ref{thm:lower_bound}]
\label{thm:intro_lower}
For any anonymous two-bidder mechanism $\mathcal{M}$ satisfying monotone allocation and monotone conditional price at a symmetric tie, $\PoA(\mathcal{M}) \ge 1.5$.
\end{theorem}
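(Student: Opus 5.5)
We will exhibit a single two-query, two-bidder instance and show that it has an equilibrium whose liquid welfare is at most $\frac{2}{3}$ of the optimum. The instance, in the spirit of the lower bound of Liaw, Mehta, and Perlroth~\cite[Theorem~6.2]{liaw2023efficiency}, is as follows. Both bidders have target $1$. There are two queries $j_1,j_2$. Bidder 1 values $j_1$ at $1$ and $j_2$ at $\epsilon$ (or $0$). Bidder 2 values $j_2$ at $1$ and $j_1$ at $\epsilon$. By itself this is symmetric, so we break the symmetry: bidder 1 has value $1$ on $j_1$ and a small value on $j_2$, and bidder 2 also has a ``shared'' query $j_1$ on which it has value $1$.

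Concretely, let query $A$ have values $(1,1)$ and query $B$ have values $(0,1)$, so that $\LW(\OPT)=2$ (bidder 1 gets $A$, bidder 2 gets $B$). We want an undominated equilibrium in which both bidders submit the same bid $b$ on $A$. Anonymity forces $A$ to be split at a symmetric tie, with each bidder winning with probability $1/2$ and paying a conditional price $\pi(b)$. Bidder 1 obtains value $1/2$ from $A$ and spends $\pi(b)/2$. Bidder 2 obtains value $1/2+x_B$ and uses its slack from $B$, where it faces no competition and pays little, to afford an aggressive bid on $A$. Liquid welfare is then $\min(\text{value},\text{spend})$-type, namely $\frac12+\frac12+1=2$. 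This is not yet a loss, so the instance must be tuned: bidder 2's value on $A$ should be $1$, bidder 1's value on $A$ should be $1$, and bidder 2 should have many ``free'' queries.

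We therefore refine the instance as follows:
\begin{itemize}
\item query $A$ is valued $1$ by bidder 1 and $1$ by bidder 2;
\item query $B$ is valued $1$ by bidder 2 only;
\item query $C$ is valued $1$ by bidder 1 only, but bidder 2 bids on it and takes half of it at a symmetric tie.
\end{itemize}
The target equilibrium is this. Bidder 2 uses the slack from $B$, where it pays $\approx 0$ as the only bidder, to bid on $C$ at bidder 1's level, taking half of $C$ at price $\pi$ that its budget of slack $1$ covers. Bidder 1's ROS constraint binds: its value is $\frac12+\frac12$ from $A$ and $C$, and it spends up to that. Monotone allocation and monotone conditional price at the tie guarantee that bidder 1 cannot profitably raise its bid above the tie: raising it raises the price of all won units and violates ROS. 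Counting then gives $\LW(\ALG)=\frac12+\frac12+1+\frac12\cdot 0$ against $\LW(\OPT)=3$, a ratio of $\frac{3}{2}$.

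The key steps, in order, are:
\begin{enumerate}
\item Fix the instance and the symmetric tie bids.
\item Verify bidder 2's ROS feasibility. Its spend on $C$ must be at most its value $1$ plus $\frac12\cdot 0$; this constrains $\pi$ and uses the $B$ slack.
\item Verify bidder 1's ROS is tight at the tie. This requires the price at the tie to equal the value.
\item Show no deviation helps. Bidder 1 lowering its bid loses allocation by monotonicity. Raising it increases the conditional price by the monotone-price assumption, and we will argue that this makes ROS infeasible. Bidder 2's deviations are bounded similarly.
\item Confirm the bids are undominated.
\end{enumerate}
The main obstacle is step 4, the no-profitable-deviation argument for bidder 1 using only the weak assumptions. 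A general mechanism's allocation curve near the tie is arbitrary, so we will need a limiting (\emph{$\epsilon$-perturbed}) version of the instance and accept a PoA of $1.5-\epsilon$ for every $\epsilon$.
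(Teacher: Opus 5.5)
There is a genuine gap: the instance you settle on does not give ratio $\frac32$. Query $A$ is worth $1$ to both bidders, so it contributes $1$ to $\LW(\ALG)$ however it is split, and your tally $\frac12+\frac12+1+\frac12\cdot 0$ omits bidder~2's half of $A$. The correct count is $\LW(\ALG)=1+1+\frac12=\frac52$ against $\LW(\OPT)=3$, a ratio of $\frac65$. Deleting $A$ only brings you to $2/\frac32=\frac43$, which is the Liaw--Mehta--Perlroth bound this theorem improves.

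The missing idea is the size of the spoiler's uncontested slack query. Suppose the contested query is worth $v$ to the high bidder (equal to the tie price), each side wins it with probability $\pi^*$ at the tie, and the spoiler's free query is worth $s$. ROS feasibility for the spoiler needs $s\ge \pi^* v$, and the ratio is $(v+s)/(\pi^* v+s)$, which is decreasing in $s$. Only $s=\pi^* v$, the spoiler's exact tie payment ($\frac12$ rather than $1$ in your normalization), gives $(1+\pi^*)/(2\pi^*)\ge\frac32$. Your oversized slack also breaks the equilibrium. Bidder~2 ends with leftover slack $\frac12$ and value $1$ on $A$, so under an SPA-like rule it can win all of $A$ at price $1$ and strictly gain.

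Step~4 fails in a sharper way. Property~(3) only says the conditional price does not \emph{decrease} above the tie, so ``raising the bid raises the price'' is not available. When the high bidder's value equals the tie price (your bidder~1 on $C$), SPA lets it win all of $C$ at price exactly $1$ with ROS still balanced, which is a strictly profitable deviation. You say an $\epsilon$-perturbation is needed but do not specify it, and that is where the proof actually lives. The paper's construction works as follows:
\begin{itemize}
\item The high bidder gets value $v-\epsilon$ on the contested query, strictly below the tie price $v=c_1(B,B)$, so every extra unit won costs a strict deficit $\epsilon$.
\item The high bidder also gets a separate uncontested query worth $\pi^*\epsilon/\pi_0$. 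Its surplus $\pi^*\epsilon$ exactly covers the deficit at the tie and leaves nothing for outbidding.
\item The spoiler has value $\epsilon$ on the contested query and a free query worth $\pi^*(v-\epsilon)/\pi_0$, so its ROS also binds.
\end{itemize}
Then any deviation that pushes a bidder's winning probability above $\pi^*$ violates ROS by Property~(3). Any other deviation cannot increase value, using Property~(1) and the cap $\pi_0$ on uncontested allocation.

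Finally, you cannot assume a $\frac12$--$\frac12$ split at price $1$. The hypotheses supply only one bid $B$ with $\pi^*\le\frac12$, tie price $v$, and maximal uncontested allocation $\pi_0\le 1$. The values must be scaled to these quantities, which yields the ratio $\frac{1}{2\pi^*}+\frac{1}{2\pi_0}-\frac{\epsilon}{2\pi^* v}\ge\frac32-\frac{\epsilon}{2\pi^* v}$.
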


We prove Theorem 1.2 by adapting the lower bound construction of Liaw, Mehta, and Perlroth~\cite[Theorem~6.2]{liaw2023efficiency} (which gave a lower bound of $4/3$ for two bidders). The high-level intuition is that bidder 1 has value $1/2$ for an uncontested query $A$ and essentially zero value for query $B$, while bidder 2 has value $1$ for query $B$ and zero value for query $A$. Bidder 1 wins query $A$ for free, accumulating $1/2$ in return-on-spend surplus (slack). Bidder 1 then spends this surplus to tie bidder 2 on query $B$, splitting it equally: each bidder wins query $B$ with probability $1/2$ and pays $1/2$. In this equilibrium, bidder 1 receives $1/2$ value from query $A$ and bidder 2 receives $1/2$ value from query $B$, yielding $\LW(\ALG) = 1$. In contrast, $\OPT$ allocates query $A$ to bidder 1 and query $B$ to bidder 2 for total welfare $\LW(\OPT) = 1.5$, giving a tight $\PoA = 1.5$.

Note that to make this argument go through, we make use a third query that provides ``slack'' for bidder $2$ to win query $B$ and ensure that query $B$ also uses up slack for bidder $2$.
This is more for technical reasons to ensure bidder $2$ cannot win more of query $B$ while maintaining their constraints.

\paragraph{Tight $2 - \Theta(1/n)$ Scaling for $n \ge 2$ Bidders (Section~\ref{sec:n_bidders}).}
Our third main result shows that the $2n$-Proportional First-Price Auction ($\pFPA_{2n}$, with $r = 2n$) breaks the deterministic barrier of $2$ for every $n \ge 2$ and matches the optimal $2 - \Theta(1/n)$ scaling.

\begin{theorem}[Formal version in Theorem~\ref{thm:n_bidders}]
\label{thm:intro_n_bidders}
For any $n \ge 2$ bidders, in the $2n$-Proportional First-Price Auction ($\pFPA_{2n}$), any undominated bid profile satisfies $\frac{\LW(\OPT)}{\LW(\ALG)} \le 2 - \frac{1}{4n + 1}$, and hence $\PoA(\pFPA_{2n}) \le 2 - \frac{1}{4n + 1}$.
\end{theorem}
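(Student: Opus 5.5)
We follow the same Lagrangian template as for Theorem~\ref{thm:intro_upper}, now with $n$ bidders and exponent $r = 2n$. For each bidder $i$, Lemma~\ref{lem:binding_ros} gives a dual multiplier $\mu_i \in [0,1)$. It also gives complementary slackness $\mu_i \sum_j \Delta_{i,j} = 0$ and feasibility $S_i \le V_i$. Here $\Delta_{i,j} = x_{i,j}v_{i,j} - p_{i,j}$, and a bidder's bid on each query maximizes the surrogate $x_{i,j}v_{i,j} - \mu_i p_{i,j}$.

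\textbf{First-order condition.} For contested queries, write $B_{-i} = \sum_{k \ne i} b_{k,j}^r$. Then $x = b^r/(b^r + B_{-i})$ and $p = bx$. Differentiating the surrogate yields
\[
v_{i,j}\, r (1-x_{i,j}) = \mu_i b_{i,j}\bigl(1 + r(1-x_{i,j})\bigr), \qquad\text{i.e.}\qquad v_{i,j} = \mu_i b_{i,j}\Bigl(1 + \frac{1}{r(1-x_{i,j})}\Bigr).
\]
For $r = 1$ and $n = 2$ this recovers Lemma~\ref{lem:foc}. Undominated bids are used in two places: to rule out bids above value, so that $b_{i,j} \le v_{i,j}$ whenever $\mu_i < 1$, and to handle bidders who bid $0$ or are uncontested.

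\textbf{Pointwise inequality.} The core step is to prove, for every query $j$ and all multipliers,
\[
\max_i v_{i,j} \le \gamma \sum_i p_{i,j} + \sum_i c(\mu_i)\,\Delta_{i,j}, \qquad \gamma = 2 - \tfrac{1}{4n+1},
\]
with coefficients $c(\mu_i) \ge 0$ chosen so that summation works out. Summing over $j$ and using complementary slackness, we want the $\Delta$-terms to collapse to at most $(2-\gamma)$ times $\sum_i (V_i - S_i)$ plus multiples of slack. This mirrors the $V_i + \tfrac12 S_i$ bound in the two-bidder case. Combined with $S_i \le V_i$, it should give $\LW(\OPT) \le \sum_i (V_i + (\gamma-1)S_i) \le \gamma\,\LW(\ALG)$.

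Concretely, one natural form is $c(\mu_i) = 1 + \kappa \mu_i$. Since $\mu_i \sum_j \Delta_{i,j} = 0$, the $\kappa\mu_i$ part vanishes after summation. The remaining terms then give $\sum_i (\gamma S_i + V_i - S_i)$.

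\textbf{Case analysis for the pointwise inequality.} Fix $j$ and let $h$ be the highest-value bidder. If $h$'s win probability $x_h$ is large, then $h$'s own term $x_h v_h$ nearly covers $v_h$. If $x_h$ is small, the first-order condition with $r = 2n$ keeps $b_h \ge v_h/(\mu_h(1+1/(2n)))$ bounded away from zero relative to $v_h$. Some other bidder $k$ must then bid at least roughly $b_h$, since otherwise $x_h$ would be large: a large exponent makes the highest bid take most of the mass. We have $\sum_k x_k = 1$, and at least one competitor has share at least $(1-x_h)/(n-1)$. That competitor therefore pays a nontrivial amount relative to $v_h$, and the $\gamma$ factor on payments covers the rest.

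The choice $r = 2n$ balances two effects. A large $r$ makes the allocation nearly deterministic, so $1 - x_h$ is small unless a competitor bids close to $b_h$. At the same time, $1/(r(1-x))$ controls shading, and it must not get too small relative to the $1/n$ dilution among competitors.

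\textbf{Main obstacle.} The hardest step is proving the pointwise inequality uniformly over all $n$-tuples of bids and multipliers with the exact constant $2 - 1/(4n+1)$. The plan is to reduce to the worst case. All non-highest bidders should be taken symmetric, with $\mu_k \to 1$ so that they contribute only payments. Then optimize over a single parameter $t = b_{\max,-h}/b_h$, which gives a one-variable inequality in $t$ and $x_h = 1/(1 + (n-1)t^{2n})$ to verify by calculus. Using Bernoulli-type bounds such as $(1+1/(2n))^{2n} \le e$ should yield the $1/(4n+1)$ margin.
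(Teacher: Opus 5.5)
Your proposal never proves the per-query inequality, which is the whole content of the theorem, and the heuristics you sketch for it would not get there.

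\textbf{The per-query inequality is missing.} Three steps in your sketch are wrong.
\begin{itemize}
\item Your bid bound is reversed. Since $1+\frac{1}{r(1-x_h)}\ge 1+\frac1r$, the first-order condition gives $b_h\le v_h/\bigl(\mu_h(1+\frac1r)\bigr)$. The usable bound is $b_h\ge\frac{r(1-x_h)}{r(1-x_h)+1}v_h$, which degrades as $x_h\to 1$ and must be traded off against the term $x_hv_h$.
\item $\mu_i<1$ does not give $b_{i,j}\le v_{i,j}$; bidders with slack can overbid.
\item The pigeonhole step is too weak, and this is the serious problem. Using one competitor with share at least $(1-x_h)/(n-1)$ bounds the competitors' spend only by $b_h(1-x_h)^{1+1/r}/\bigl((n-1)^{1+1/r}x_h^{1/r}\bigr)$, a factor $n-1$ too small. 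Optimizing over $x_h$ with this bound gives a constant $c$ solving $((n-1)c)^{r+1}+c=r$. Already for $n=3$, $r=6$ the resulting ratio is about $2.57>2$.
\end{itemize}
The missing idea is to sum \emph{all} competitors' payments. Writing $b_k=b_h(x_k/x_h)^{1/r}$ gives $S^{(j)}=b_h\bigl(x_h+x_h^{-1/r}\sum_{k\ne h}x_k^{1+1/r}\bigr)$. Jensen's inequality for $z\mapsto z^{1+1/r}$ then gives $S^{(j)}\ge v_h\,g(x_h)$, where $g(x)=\frac{r(1-x)}{r(1-x)+1}\bigl(x+(n-1)^{-1/r}(1-x)^{1+1/r}x^{-1/r}\bigr)$. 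The same Jensen step is what would justify your unproved ``symmetric competitors are worst'' reduction. You would then still need to show that $c=\inf_x g(x)/(1-x)$ is the positive root of $(n-1)c^{r+1}+c-r=0$, and that $c>1+\frac{1}{4n}$ when $r=2n$. None of this appears in your proposal, so the constant $2-\frac{1}{4n+1}$ is never derived.

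\textbf{Your framework does not fit the statement.} The theorem covers every undominated profile. Lemma~\ref{lem:binding_ros} (multipliers, complementary slackness, surrogate maximization) is an equilibrium statement, proved only for $n=2$, $r=1$, and undominated profiles carry no multipliers. You do not need them:
\begin{itemize}
\item Take $\kappa=0$, i.e.\ prove $\max_i v_{i,j}\le\gamma\sum_i p_{i,j}+\sum_i\Delta_{i,j}$ for $\gamma\ge 1+1/c$.
\item The right-hand side equals $\sum_i x_{i,j}v_{i,j}+(\gamma-1)S^{(j)}\ge x_hv_h+(\gamma-1)S^{(j)}$, so it follows from $g(x)\ge c(1-x)$.
\item Summing over $j$ gives $\LW(\OPT)\le\sum_i\bigl(V_i+(\gamma-1)S_i\bigr)\le\gamma\,\LW(\ALG)$, with no complementary slackness needed.
\item The bid lower bound comes directly from undominatedness. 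If $v_{h,j}>b_{h,j}\bigl(1+\frac{1}{r(1-x_h)}\bigr)$, then $\partial V_h/\partial b_{h,j}>\partial S_h/\partial b_{h,j}$, so a small raise increases both value and slack.
\end{itemize}
You have also misidentified the adversarial competitor. As $\mu_k\to 1$, competitor $k$ gets positive surplus $\Delta_{k,j}=p_{k,j}/(r(1-x_{k,j}))$, which helps the inequality. The worst case is $v_{k,j}=0$: a competitor that merely burns slack and contributes only $(\gamma-1)p_{k,j}$.
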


For general $n \ge 2$ bidders, we analyze $\pFPA_r$ under any set of undominated bids by adapting the undominated-bid and value-plus-spend convex-combination framework of Liaw, Mehta, and Perlroth~\cite[Section~5]{liaw2023efficiency} and Mehta~\cite{mehta2022auction} to $n$ bidders. We first show that any undominated bid satisfies $b_{i,j} \ge \frac{r(1 - x_{i,j})}{r(1 - x_{i,j}) + 1} v_{i,j}$ (Lemma~\ref{lem:n_undom}), and use Jensen's inequality on the $n - 1$ non-optimal bidders' winning probabilities to lower-bound the total query spend $S^{(j)}$ as a function of the optimal bidder's winning probability $x^*_j$ (Lemma~\ref{lem:n_spend}). Reducing the resulting minimax program to the characteristic equation $(n - 1)c^{r+1} + c - r = 0$ (Lemmas~\ref{lem:n_duality}--\ref{lem:n_root}), we prove that setting $r = 2n$ yields $c > 1 + \frac{1}{4n}$ and hence $\PoA(\pFPA_{2n}) \le 1 + \frac{1}{c} < 2 - \frac{1}{4n + 1} = 2 - \Omega(1/n)$ (Theorem~\ref{thm:n_bidders}), matching the $2 - \frac{4}{n+4} = 2 - O(1/n)$ lower bound of \cite{liaw2023efficiency} for all $n \ge 2$, up to the constant on the $1/n$ term.

\subsection{Related Work}
\label{subsec:related}
Our work contributes to the growing literature on auction design, equilibrium efficiency, and learning dynamics under auto-bidding; see Aggarwal et al.~\cite{aggarwal2024autobidding} for a comprehensive survey.
\paragraph{Prior-Free Auction Efficiency under Auto-Bidding.}
The theoretical study of auto-bidding was initiated by Aggarwal et al.~\cite{aggarwal2019autobidding}, who showed that under return-on-spend (ROS) and budget constraints, uniform bidding is optimal in truthful auctions and yields a tight liquid Price of Anarchy ($\PoA$) of $2$ for $\SPA$ (see also Babaioff et al.~\cite{babaioff2021non}). In the prior-free setting without side information, Mehta~\cite{mehta2022auction} showed that randomization improves the two-bidder $\PoA$ to $\approx 1.89$ via a randomized truthful auction ($\rTruth$), while proving that any truthful auction has $\PoA \ge 2$ as $n \to \infty$. Liaw, Mehta, and Perlroth~\cite{liaw2023efficiency} extended this study to non-truthful mechanisms: they proved that no deterministic auction beats $\PoA = 2$, established $\PoA(\FPA) = 2$ under non-uniform bidding (also proved concurrently by Deng et al.~\cite{deng2022efficiency} and further refined for slice-based and hybrid bidding in \cite{deng2024autobidders, colinibaldeschi2026optimal}), introduced a randomized first-price auction ($\rFPA$) achieving $\PoA \le 1.8$ for two bidders, and proved a lower bound of $\PoA \ge \frac{2n+4}{n+4} = 2 - \frac{4}{n+4}$ for $n = 2k$ bidders. Prior-free efficiency has also been studied under Generalized Second-Price auctions~\cite{deng2024gsp}, user costs~\cite{deng2023usercosts}, joint budget and ROS constraints~\cite{liaw2024efficiency, colinibaldeschi2026optimal}, and interdependent values~\cite{banchio2025autobidding}. Our work resolves both open questions in the canonical prior-free ROS setting of \cite{mehta2022auction, liaw2023efficiency, aggarwal2024autobidding}: we establish $\PoA(\pFPA_1) = 1.5$ for $n = 2$ (with a matching $1.5$ lower bound under mild assumptions) and $\PoA(\pFPA_{2n}) \le 2 - \frac{1}{4n+1} = 2 - \Omega(1/n)$ for all $n \ge 2$, matching the $2 - O(1/n)$ lower bound of \cite{liaw2023efficiency}.
\paragraph{Side Information, Bayesian Design, and Strategic Extensions.}
When the platform has machine-learned value predictions, augmenting $\SPA$ or $\FPA$ with bid boosts or personalized reserves improves the Price of Anarchy and individual welfare guarantees~\cite{deng2021towards, balseiro2021robust, deng2024individual}, while posted-price and prior-independent mechanisms achieve constant-factor approximations~\cite{deng2022posted}. A parallel Bayesian literature characterizes revenue- and welfare-optimal mechanisms for value-maximizing and ROI-constrained buyers across public and private constraint settings~\cite{golrezaei2021auction, balseiro2021landscape, balseiro2022optimal, balseiro2024optimal}. Beyond single-platform static auctions with fixed constraints, recent work has examined auctions without seller commitment~\cite{perlroth2023auctions}, advertiser-level incentive compatibility in reporting ROS and budget targets~\cite{alimohammadi2023incentive, feng2024strategic}, and multi-channel or multi-platform auto-bidding~\cite{aggarwal2023multi, aggarwal2025multiplatform}.
\paragraph{Pacing, Online Learning, and Proportional Allocation.}
Another active research thread studies the structure and PPAD-completeness of multiplicative pacing equilibria in second-price and first-price markets~\cite{conitzer2022multiplicative, conitzer2022pacing, chen2021complexity}, online bidding algorithms under budget and ROS constraints~\cite{balseiro2019learning, feng2023online, balseiro2024field, vijayan2025online}, and liquid welfare guarantees for pacing and no-regret learning dynamics without equilibrium convergence~\cite{gaitonde2023budget, fikioris2023liquid, lucier2024autobidders, aggarwal2025noregret}. Finally, our efficiency benchmark is the \emph{liquid welfare}~\cite{dobzinski2014efficiency, syrgkanis2013composable}, and our allocation rule builds on proportional sharing (the Kelly mechanism), whose efficiency has been extensively studied for divisible resource allocation with quasilinear and budget-constrained agents~\cite{kelly1997charging, johari2004efficiency, christodoulou2016proportional, caragiannis2016welfare}. 
Unlike divisible resource sharing where every bidder pays their bid unconditionally, $\pFPA_r$ is a single-slot randomized auction in which only the winner pays their bid and cross-query decisions are coupled via global ROS constraints.

\section{Preliminaries}
\label{sec:prelim}

\subsection{Auto-bidding and \texorpdfstring{$r$}{r}-Proportional FPA Model}

Let $A = \{1, \dots, n\}$ denote a set of $n \ge 2$ auto-bidding advertisers and let $Q = \{1, \dots, m\}$ denote a finite, non-empty set of $m \ge 1$ single-slot queries (ad opportunities). For each advertiser $i \in A$ and each query $j \in Q$, let $v_{i,j} \ge 0$ denote the non-negative intrinsic value that advertiser $i$ derives from winning query $j$ (with $\max_{i \in A} v_{i,j} > 0$ for at least one query $j \in Q$), and let $T_i > 0$ denote advertiser $i$'s target return-on-spend (ROS) parameter. For a given advertiser $i \in A$, we write $\mathbf{b}_{-i,j} = (b_{k,j})_{k \ne i}$ for the competing bids on query $j$ (and when $n = 2$, we write $-i = 3 - i$ for the single competing advertiser).

Each query $j \in Q$ is sold independently using the \emph{$r$-Proportional First-Price Auction} ($\pFPA_r$, for an exponent parameter $r > 0$), denoted by $\mathcal{M}_{\mathrm{prop}}^{(r)}$ (and simply $\mathcal{M}_{\mathrm{prop}} = \mathcal{M}_{\mathrm{prop}}^{(1)}$ or $\pFPA$ when $r = 1$).

\begin{definition}[$r$-Proportional First-Price Auction ($\pFPA_r$)]
\label{def:pfpa}
Fix an exponent $r > 0$. For each query $j \in Q$, advertisers $1, \dots, n$ simultaneously submit non-negative bids $\mathbf{b}_j = (b_{1,j}, \dots, b_{n,j}) \in \R_{\ge 0}^n$. Under $\mathcal{M}_{\mathrm{prop}}^{(r)}$:
\begin{enumerate}
  \item \textbf{Allocation rule:} If $\sum_{k=1}^n b_{k,j}^r > 0$, advertiser $i \in A$ wins query $j$ with probability
  \begin{equation}
  \label{eq:alloc}
    x_{i,j}(\mathbf{b}_j) = \frac{b_{i,j}^r}{\sum_{k=1}^n b_{k,j}^r},
  \end{equation}
  whereas if $\mathbf{b}_j = \mathbf{0}$, the query is unallocated ($x_{i,j}(\mathbf{0}) = 0$).
  \item \textbf{Payment rule:} If at least two advertisers submit strictly positive bids ($b_{i,j} > 0$ and $\sum_{k \ne i} b_{k,j} > 0$), the winning advertiser $i$ pays their submitted bid $b_{i,j}$. If only one advertiser submits a positive bid ($b_{i,j} > 0$ and $\sum_{k \ne i} b_{k,j} = 0$), advertiser $i$ wins query $j$ with probability $1$ and pays $0$ (receiving the uncontested query for free). Consequently, advertiser $i$'s expected payment on query $j$ is
  \begin{equation}
  \label{eq:payment}
    p_{i,j}(\mathbf{b}_j) =
    \begin{cases}
      \dfrac{b_{i,j}^{r+1}}{\sum_{k=1}^n b_{k,j}^r} & \text{if } b_{i,j} > 0 \text{ and } \sum_{k \ne i} b_{k,j} > 0, \\[6pt]
      0 & \text{if } b_{i,j} = 0 \text{ or } \sum_{k \ne i} b_{k,j} = 0,
    \end{cases}
  \end{equation}
  so that $p_{i,j}(\mathbf{b}_j) = b_{i,j} x_{i,j}(\mathbf{b}_j)$ on every contested query.
\end{enumerate}
When the bid profile $\mathbf{b} = (b_{i,j})_{i \in A, j \in Q}$ is clear from context, we write $x_{i,j}$ and $p_{i,j}$ for $x_{i,j}(\mathbf{b}_j)$ and $p_{i,j}(\mathbf{b}_j)$, respectively.
\end{definition}

\begin{remark}
\label{rem:uncontested_free}
Having a sole positive bidder ($b_{i,j} > 0, \sum_{k \ne i} b_{k,j} = 0$) pay $0$ in Eq.~\eqref{eq:payment} corresponds to the infimum $\inf_{b > 0} \frac{b^{r+1}}{b^r + 0} = 0$ of the uncontested first-price payment as $b \to 0^+$. Without this rule (if $p_{i,j}(b_{i,j}, 0) = b_{i,j}$), whenever $\sum_{k \ne i} b_{k,j} = 0$ on a query with $v_{i,j} > 0$ and advertiser $i$ has a binding ROS constraint ($\mu_i > 0$), advertiser $i$ would want to lower $b_{i,j} \to 0^+$ as much as possible to keep $x_{i,j} = 1$ while reducing spend, precluding the existence of an equilibrium.
\end{remark}

Fix an advertiser $i \in A$ and suppose the competing advertisers submit $\mathbf{b}_{-i} = (\mathbf{b}_{-i,j})_{j \in Q}$. The auto-bidding agent for advertiser $i$ chooses a bid vector $\mathbf{b}_i = (b_{i,j})_{j \in Q} \in \R_{\ge 0}^{|Q|}$ to maximize advertiser $i$'s total expected value
\[
  V_i(\mathbf{b}_i, \mathbf{b}_{-i}) = \sum_{j \in Q} x_{i,j}(\mathbf{b}_j) v_{i,j}
\]
subject to the global return-on-spend (ROS) constraint that total expected spend $S_i(\mathbf{b}_i, \mathbf{b}_{-i}) = \sum_{j \in Q} p_{i,j}(\mathbf{b}_j)$ does not exceed $T_i V_i(\mathbf{b}_i, \mathbf{b}_{-i})$:
\begin{equation}
\label{eq:autobidder_opt}
\begin{aligned}
  \max_{\mathbf{b}_i \in \R_{\ge 0}^{|Q|}} \quad & V_i(\mathbf{b}_i, \mathbf{b}_{-i}) = \sum_{j \in Q} x_{i,j}(\mathbf{b}_j) v_{i,j} \\
  \text{s.t.} \quad & S_i(\mathbf{b}_i, \mathbf{b}_{-i}) \le T_i V_i(\mathbf{b}_i, \mathbf{b}_{-i}).
\end{aligned}
\end{equation}

\subsection{Liquid Welfare and Price of Anarchy}

To measure the aggregate efficiency of an allocation across auto-bidders with ROS constraints, we adopt the standard notion of \emph{liquid welfare} introduced by Dobzinski and Paes Leme~\cite{dobzinski2014efficiency} and used throughout the auto-bidding literature \cite{aggarwal2019autobidding, mehta2022auction, liaw2023efficiency}.

\begin{definition}[Liquid Welfare]
\label{def:liquid_welfare}
For any randomized allocation $\mathbf{x} = (x_{i,j})_{i \in A, j \in Q}$ satisfying $x_{i,j} \ge 0$ and $\sum_{i \in A} x_{i,j} \le 1$ for all $j \in Q$, the \emph{liquid welfare} is defined as
\begin{equation}
\label{eq:lw_def}
  \LW(\mathbf{x}) = \sum_{i \in A} T_i \sum_{j \in Q} x_{i,j} v_{i,j}.
\end{equation}
\end{definition}

\begin{remark}
\label{rem:normalization}
Following Remark~2.4 of \cite{liaw2023efficiency}, we may assume without loss of generality that $T_i = 1$ for all $i \in A$. Indeed, defining the rescaled values $v'_{i,j} = T_i v_{i,j} \ge 0$, the constraint in Eq.~\eqref{eq:autobidder_opt} becomes $\sum_{j \in Q} p_{i,j} \le \sum_{j \in Q} x_{i,j} v'_{i,j}$, the objective $\sum_{j \in Q} x_{i,j} v_{i,j} = \frac{1}{T_i}\sum_{j \in Q} x_{i,j} v'_{i,j}$ has identical best responses (since $T_i > 0$ is a positive constant), and the liquid welfare in Eq.~\eqref{eq:lw_def} becomes $\LW(\mathbf{x}) = \sum_{i \in A} \sum_{j \in Q} x_{i,j} v'_{i,j}$. Henceforth, we set $T_i = 1$ for all $i \in A$.
\end{remark}

Under the normalization $T_i = 1$, for each advertiser $i \in A$ and query $j \in Q$, we define the \emph{expected net surplus} on query $j$ as
\begin{equation}
\label{eq:surplus_def}
  \Delta_{i,j}(\mathbf{b}_j) = x_{i,j}(\mathbf{b}_j) v_{i,j} - p_{i,j}(\mathbf{b}_j).
\end{equation}
Advertiser $i$'s ROS constraint $S_i \le V_i$ in Eq.~\eqref{eq:autobidder_opt} is then equivalently written as
\[
  \sum_{j \in Q} \Delta_{i,j}(\mathbf{b}_j) = V_i(\mathbf{b}) - S_i(\mathbf{b}) \ge 0.
\]
Furthermore, the unconstrained optimal allocation $\mathbf{x}^{\OPT}$ that maximizes liquid welfare assigns each query $j \in Q$ entirely to an advertiser with the highest value on query $j$, achieving optimal liquid welfare
\begin{equation}
\label{eq:lw_opt}
  \LW(\OPT) = \sum_{j \in Q} \max_{i \in A} v_{i,j}.
\end{equation}

\begin{definition}[Equilibrium and Undominated Bids]
\label{def:equilibrium}
Fix an instance $\mathcal{I}$ with advertisers $A = \{1, \dots, n\}$, queries $Q$, and values $\{v_{i,j}\}$.
\begin{enumerate}
  \item A bid profile $\mathbf{b} = (\mathbf{b}_1, \dots, \mathbf{b}_n) \in \R_{\ge 0}^{n \times |Q|}$ is an \emph{equilibrium} if each advertiser $i \in A$ satisfies its ROS constraint ($S_i(\mathbf{b}_i, \mathbf{b}_{-i}) \le V_i(\mathbf{b}_i, \mathbf{b}_{-i})$) and $\mathbf{b}_i \in \R_{\ge 0}^{|Q|}$ is an optimal solution to advertiser $i$'s problem~\eqref{eq:autobidder_opt} given $\mathbf{b}_{-i}$.
  \item Following \cite[Definition~5.2]{liaw2023efficiency}, a bid profile $\mathbf{b} \in \R_{\ge 0}^{n \times |Q|}$ is \emph{undominated} if each advertiser $i \in A$ satisfies its ROS constraint ($S_i(\mathbf{b}_i, \mathbf{b}_{-i}) \le V_i(\mathbf{b}_i, \mathbf{b}_{-i})$) and for every single query $j \in Q$, changing $b_{i,j}$ to any $b'_{i,j} \ge 0$ (while keeping all other bids fixed) either does not strictly increase advertiser $i$'s value on query $j$ or violates advertiser $i$'s ROS constraint. Every equilibrium bid profile is undominated.
\end{enumerate}
We denote the liquid welfare of an equilibrium (or undominated) bid profile $\mathbf{b}$ by
\[
  \LW(\ALG) = \sum_{i=1}^n V_i(\mathbf{b}) = \sum_{i=1}^n \sum_{j \in Q} x_{i,j}(\mathbf{b}_j) v_{i,j}.
\]
\end{definition}

\begin{definition}[(Liquid) Price of Anarchy]
\label{def:poa}
For a mechanism $\mathcal{M}$ and an instance $\mathcal{I} = (A, Q, \{v_{i,j}\})$, let $\mathrm{Eq}(\mathcal{I})$ denote the set of equilibria of $\mathcal{M}$ on $\mathcal{I}$. The \emph{(liquid) Price of Anarchy} ($\PoA$) of $\mathcal{M}$ is defined as
\[
  \PoA(\mathcal{M}) = \sup_{\mathcal{I}} \sup_{\mathbf{b} \in \mathrm{Eq}(\mathcal{I})} \frac{\LW(\OPT)}{\LW(\ALG)}.
\]
\end{definition}

\section{Proportional FPA Achieves \texorpdfstring{$\PoA \le 1.5$}{PoA <= 1.5}}
\label{sec:upper_bound}

In this section, we specialize to the two-bidder Proportional First-Price Auction ($\pFPA = \pFPA_1$, with $n = 2$ and $r = 1$) and prove that any equilibrium achieves at least $\frac{2}{3}$ of the optimal liquid welfare through four lemmas: (1) positivity of equilibrium bids on positive-value queries (Lemma~\ref{lem:pos_bid}); (2) a Lagrangian duality and complementary slackness characterization of equilibrium bidding (Lemma~\ref{lem:binding_ros}); (3) a first-order characterization relating values to equilibrium bids and dual multipliers $\mu_i$ on contested queries (Lemma~\ref{lem:foc}); and (4) a pointwise inequality bounding $\max(v_{1,j}, v_{2,j})$ on each query (Lemma~\ref{lem:pointwise}).

\begin{lemma}
\label{lem:pos_bid}
Let $\mathbf{b} = (\mathbf{b}_1, \mathbf{b}_2) \in \R_{\ge 0}^{2 \times |Q|}$ be any equilibrium of $\mathcal{M}_{\mathrm{prop}}$ on an instance $\mathcal{I} = (A, Q, \{v_{i,j}\})$ with $v_{i,j} \ge 0$. Then for every advertiser $i \in \{1, 2\}$ and query $j \in Q$, if $v_{i,j} > 0$, then $b_{i,j} > 0$.
\end{lemma}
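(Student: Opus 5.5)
We argue by contradiction. Fix an equilibrium $\mathbf{b}$, an advertiser $i$, and a query $j$ with $v_{i,j} > 0$ but $b_{i,j} = 0$. We will exhibit a unilateral deviation on query $j$ alone that keeps advertiser $i$'s ROS constraint satisfied and strictly increases $V_i$, contradicting optimality of $\mathbf{b}_i$ in problem~\eqref{eq:autobidder_opt}. Write $b = b_{-i,j}$ for the opponent's bid on $j$, and split into two cases.

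\textbf{Case $b = 0$.} The query is unallocated, so $x_{i,j} = 0$. Deviate to any $b'_{i,j} > 0$. By Definition~\ref{def:pfpa}, advertiser $i$ is then the sole positive bidder, so $x_{i,j} = 1$ and $p_{i,j} = 0$. On this query, $V_i$ rises by $v_{i,j} > 0$ while $S_i$ is unchanged, so the ROS constraint $S_i \le V_i$ still holds. This is a strict improvement.

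\textbf{Case $b > 0$.} Now $x_{i,j} = 0$ and $p_{i,j} = 0$. Deviate to $b'_{i,j} = \epsilon > 0$. Then
\[
x_{i,j} = \frac{\epsilon}{\epsilon + b}, \qquad p_{i,j} = \frac{\epsilon^2}{\epsilon + b},
\]
so the change in query slack is
\[
\Delta_{i,j} = \frac{\epsilon\,(v_{i,j} - \epsilon)}{\epsilon + b},
\]
which is strictly positive for $\epsilon < v_{i,j}$. Both $V_i$ and the aggregate slack $V_i - S_i$ therefore strictly increase. Since $\mathbf{b}$ was feasible, the ROS constraint remains satisfied, and advertiser $i$'s value strictly increases. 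This is again a contradiction.

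The only care needed is in the second case. The payment must be second order in $\epsilon$ while the allocation is first order, which holds because $r = 1$ gives a payment of $b'\,x$ with $x \approx \epsilon/b$. The argument also uses that the ROS constraint is global, so adding positive slack on one query cannot break feasibility. No duality is needed; the argument uses only the definitions and equilibrium optimality.
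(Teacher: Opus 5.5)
Your proof is correct and is essentially the paper's argument. The paper deviates to the single choice $b'_{i,j} = \frac{1}{2}v_{i,j}$ and covers both of your cases at once through $p'_{i,j} \le b'_{i,j}x'_{i,j}$, which yields a strict value gain and a slack gain of at least $\frac{1}{2}x'_{i,j}v_{i,j} > 0$, just as your case split does. Your closing remark that the payment must be second order in $\epsilon$ is not actually needed: your own computation shows that any $\epsilon < v_{i,j}$ works, because all that matters is that the per-unit price $b'_{i,j}$ is below $v_{i,j}$.
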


\begin{proof}
Suppose for contradiction that $v_{i,j} > 0$ and $b_{i,j} = 0$ for some advertiser $i \in \{1, 2\}$ and query $j \in Q$, so $x_{i,j}(0, b_{-i,j}) = 0$ and $p_{i,j}(0, b_{-i,j}) = 0$. If advertiser $i$ deviates on query $j$ to $b'_{i,j} = \frac{1}{2}v_{i,j} > 0$, then $x'_{i,j} = x_{i,j}(b'_{i,j}, b_{-i,j}) > 0$ and $p'_{i,j} = p_{i,j}(b'_{i,j}, b_{-i,j}) \le b'_{i,j} x'_{i,j} = \frac{1}{2}v_{i,j} x'_{i,j}$. This deviation strictly increases advertiser $i$'s total expected value $V_i$ by $x'_{i,j} v_{i,j} > 0$ while increasing net surplus $V_i - S_i$ by $\Delta'_{i,j} = x'_{i,j} v_{i,j} - p'_{i,j} \ge \frac{1}{2}x'_{i,j} v_{i,j} > 0$ (preserving the ROS constraint $V_i - S_i \ge 0$), contradicting that $\mathbf{b}$ is an equilibrium.
\end{proof}

By introducing a Lagrange multiplier $\lambda_i \ge 0$ for advertiser $i$'s ROS constraint $V_i - S_i \ge 0$, advertiser $i$ maximizes the Lagrangian $V_i + \lambda_i(V_i - S_i)$. Dividing by $1 + \lambda_i > 0$ and setting $\mu_i = \frac{\lambda_i}{1 + \lambda_i} \in [0, 1)$, we obtain the decoupled per-query surrogate objective $f_{i,j}(b) = x_{i,j}(b, b_{-i,j})v_{i,j} - \mu_i p_{i,j}(b, b_{-i,j})$. The following lemma shows via KKT optimality that at any equilibrium, each advertiser $i$ has a dual multiplier $\mu_i \in [0, 1)$ satisfying complementary slackness $\mu_i(V_i(\mathbf{b}) - S_i(\mathbf{b})) = 0$ such that each equilibrium bid $b_{i,j}$ globally maximizes $f_{i,j}(b)$.

\begin{lemma}
\label{lem:binding_ros}
Let $\mathbf{b} = (\mathbf{b}_1, \mathbf{b}_2) \in \R_{\ge 0}^{2 \times |Q|}$ be any equilibrium of $\mathcal{M}_{\mathrm{prop}}$ on an instance $\mathcal{I} = (A, Q, \{v_{i,j}\})$ with $v_{i,j} \ge 0$. Then for each advertiser $i \in \{1, 2\}$, there exists a dual multiplier $\mu_i \in [0, 1)$ satisfying complementary slackness
\begin{equation}
\label{eq:comp_slack}
  \mu_i \bigl(V_i(\mathbf{b}) - S_i(\mathbf{b})\bigr) = \mu_i \sum_{j \in Q} \Delta_{i,j}(b_{1,j}, b_{2,j}) = 0
\end{equation}
such that for every query $j \in Q$, the equilibrium bid $b_{i,j} \ge 0$ globally maximizes the per-query surrogate objective
\begin{equation}
\label{eq:surrogate_obj}
  f_{i,j}(b) = x_{i,j}(b, b_{-i,j}) v_{i,j} - \mu_i p_{i,j}(b, b_{-i,j})
\end{equation}
over $b \ge 0$.
\end{lemma}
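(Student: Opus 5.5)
The plan is to turn advertiser $i$'s best-response problem~\eqref{eq:autobidder_opt} into a \emph{convex} program by changing variables from bids to allocation probabilities. Then I would get the multiplier from standard Lagrangian duality (Slater's condition plus KKT) and translate back to bids. Fix $i$ and the competing bids, and write $c_j = b_{-i,j} \ge 0$.

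\textbf{Reparametrization.} On a contested query ($c_j > 0$), the map $b \mapsto x = b/(b+c_j)$ is a bijection from $[0,\infty)$ onto $[0,1)$, with inverse $b = c_j x/(1-x)$. The expected payment is then
\[
p = b x = \frac{c_j x^2}{1-x},
\]
which is convex on $[0,1)$. On an uncontested query ($c_j = 0$), only two outcomes are possible. Bidding $b=0$ gives $(x,p) = (0,0)$. Any $b>0$ gives $(x,p) = (1,0)$, which is weakly better in both value and surplus. So advertiser $i$'s problem becomes: maximize the linear function $\sum_j x_j v_{i,j}$ over the convex set defined by $x_j \in [0,1)$ on contested queries, $x_j \in \{0,1\}$ on uncontested ones, and the convex constraint
\[
g(x) = \sum_j \Bigl(\frac{c_j x_j^2}{1-x_j} - x_j v_{i,j}\Bigr) \le 0.
\]
For uncontested queries I would instead fix $x_j$ at its equilibrium value. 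These queries contribute $-x_j v_{i,j} \le 0$ to $g$, and their surrogate $f_{i,j}$ is handled separately below. The equilibrium allocation $x^*$ is optimal for this program because $\mathbf{b}$ is an equilibrium.

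\textbf{Multiplier via duality.} If $v_{i,j} = 0$ for all $j$, take $\mu_i = 0$: then $f_{i,j} \equiv 0$ on every query, so any bid maximizes it, and complementary slackness is trivial. Otherwise some $v_{i,j} > 0$. On a contested such query, a small $x_j > 0$ makes the $j$-th term of $g$ strictly negative. On an uncontested one, Lemma~\ref{lem:pos_bid} already gives $x_j = 1$ and contribution $-v_{i,j} < 0$. Either way we obtain a Slater point, a feasible $x$ in the relative interior of the domain with $g(x) < 0$. Strong duality for convex programs then yields $\lambda_i \ge 0$ such that:
\begin{itemize}
\item $x^*$ maximizes the Lagrangian $\sum_j x_j v_{i,j} - \lambda_i g(x)$ over the domain;
\item complementary slackness holds: $\lambda_i g(x^*) = 0$, i.e.\ $\lambda_i (V_i - S_i) = 0$.
\end{itemize}
Dividing the Lagrangian by $1+\lambda_i$ and setting $\mu_i = \lambda_i/(1+\lambda_i) \in [0,1)$ gives, up to a positive factor, $\sum_j (x_j v_{i,j} - \mu_i p_j)$, which separates across queries. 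Hence each $x^*_j$ maximizes $x v_{i,j} - \mu_i p(x)$ on its own domain, and~\eqref{eq:comp_slack} holds.

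\textbf{Back to bids.} On contested queries, the bijection $b \leftrightarrow x$ shows that $b_{i,j}$ globally maximizes $f_{i,j}(b)$ over $b \ge 0$. On uncontested queries, $f_{i,j}(b)$ equals $v_{i,j}$ for $b>0$ and $0$ for $b=0$. If $v_{i,j} > 0$, Lemma~\ref{lem:pos_bid} gives $b_{i,j} > 0$, which is a maximizer; if $v_{i,j} = 0$, then $f_{i,j} \equiv 0$.

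\textbf{Main obstacle.} The key difficulty is that the original problem is nonconvex in bids, so KKT cannot be applied directly; the reparametrization resolves this. The remaining care is in the duality step. The domain $[0,1)$ is not closed, so I would invoke the version of strong duality that needs only a Slater point in the relative interior and an attained primal optimum (attained at $x^*$), rather than any compactness. I would also verify that the case of $\lambda_i$ large is consistent with $\mu_i < 1$, which holds automatically since $\mu_i = \lambda_i/(1+\lambda_i) < 1$ for every finite $\lambda_i \ge 0$.
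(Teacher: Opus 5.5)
Your proposal is correct and follows essentially the same route as the paper. Both arguments reparametrize each contested query by the winning probability $x = b/(b+b_{-i,j})$, so that the payment $b_{-i,j}x^2/(1-x)$ is convex; both hold uncontested queries at their equilibrium outcome via Lemma~\ref{lem:pos_bid}, obtain $\lambda_i$ from Slater's condition plus KKT, and divide by $1+\lambda_i$ to get the separable surrogate. The only cosmetic difference is the case split: the paper sets $\mu_i = 0$ whenever all \emph{contested} values vanish, whereas you also let a positive-value uncontested query supply the Slater point, and both splits cover every case.
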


\begin{proof}
Fix an advertiser $i \in \{1, 2\}$, and let $Q_{-i}^- = \{j \in Q : b_{-i,j} = 0\}$ denote the set of queries that are uncontested by the competing advertiser $-i$. For any $j \in Q_{-i}^-$, we have $p_{i,j}(b, 0) = 0$ for all $b \ge 0$, and if $v_{i,j} > 0$ then $b_{i,j} > 0$ by Lemma~\ref{lem:pos_bid} so $x_{i,j}(b_{i,j}, 0) = 1$. Thus $b_{i,j}$ achieves value $v_{i,j}$ with zero payment and maximizes $f_{i,j}(b)$ over $b \ge 0$ for any $\mu_i \ge 0$, contributing total value $V_i^0 := \sum_{j \in Q_{-i}^-} v_{i,j} \ge 0$ and zero payment.

Now let $Q_{-i}^+ = \{j \in Q : b_{-i,j} > 0\}$ denote the set of queries on which advertiser $-i$ submits a positive bid. Advertiser $i$'s optimization problem~\eqref{eq:autobidder_opt} over the contested queries $Q_{-i}^+$ can be reparameterized by the winning probabilities $x_{i,j} = \frac{b_{i,j}}{b_{i,j} + b_{-i,j}} \in [0, 1)$ for $j \in Q_{-i}^+$, where $b_{i,j} = b_{-i,j}\frac{x_{i,j}}{1 - x_{i,j}}$ and $p_{i,j}(x_{i,j}) = b_{-i,j}\frac{x_{i,j}^2}{1 - x_{i,j}}$:
\[
  \max_{\mathbf{x}_i \in [0, 1)^{|Q_{-i}^+|}} \quad V_i(\mathbf{x}_i) = V_i^0 + \sum_{j \in Q_{-i}^+} v_{i,j} x_{i,j} \qquad \text{s.t.} \quad S_i(\mathbf{x}_i) - V_i(\mathbf{x}_i) = \sum_{j \in Q_{-i}^+} p_{i,j}(x_{i,j}) - V_i(\mathbf{x}_i) \le 0.
\]
Here the objective $V_i(\mathbf{x}_i)$ is linear, and since $\frac{\mathrm{d}^2 p_{i,j}}{\mathrm{d}x_{i,j}^2} = \frac{2b_{-i,j}}{(1 - x_{i,j})^3} > 0$ on $[0, 1)$, the constraint function $S_i(\mathbf{x}_i) - V_i(\mathbf{x}_i)$ is strictly convex on $[0, 1)^{|Q_{-i}^+|}$. If $v_{i,j} = 0$ for all $j \in Q_{-i}^+$, then $V_i(\mathbf{x}_i) \equiv V_i^0$ is constant and $\mu_i = 0$ directly satisfies Eqs.~\eqref{eq:comp_slack}--\eqref{eq:surrogate_obj}. Otherwise, there exists some $j_0 \in Q_{-i}^+$ with $v_{i,j_0} > 0$, so Slater's condition holds: setting $b'_{i,j_0} = \frac{1}{2}v_{i,j_0} > 0$ (which gives $\Delta'_{i,j_0} = \frac{1}{2}x'_{i,j_0}v_{i,j_0} > 0$) and $b'_{i,j} = \eta > 0$ on $j \in Q_{-i}^+ \setminus \{j_0\}$ for sufficiently small $\eta > 0$ yields an interior point $\mathbf{x}'_i \in (0, 1)^{|Q_{-i}^+|}$ with $S_i(\mathbf{x}'_i) - V_i(\mathbf{x}'_i) < 0$. By the Karush--Kuhn--Tucker (KKT) theorem, there exists a Lagrange multiplier $\lambda_i \ge 0$ satisfying complementary slackness $\lambda_i\bigl(V_i(\mathbf{x}_i) - S_i(\mathbf{x}_i)\bigr) = 0$ such that the equilibrium allocation $\mathbf{x}_i$ globally maximizes the Lagrangian
\[
  \mathcal{L}_i(\mathbf{x}_i; \lambda_i) = V_i(\mathbf{x}_i) + \lambda_i \bigl(V_i(\mathbf{x}_i) - S_i(\mathbf{x}_i)\bigr)
\]
over $[0, 1)^{|Q_{-i}^+|}$. Dividing $\mathcal{L}_i(\mathbf{x}_i; \lambda_i)$ by $1 + \lambda_i > 0$ and setting $\mu_i = \frac{\lambda_i}{1 + \lambda_i} \in [0, 1)$ gives Eq.~\eqref{eq:comp_slack} together with $\frac{\mathcal{L}_i}{1 + \lambda_i} = \sum_{j \in Q} f_{i,j}(b_{i,j})$, which separates across $j \in Q$ and implies that $b_{i,j}$ globally maximizes $f_{i,j}(b)$ over $b \ge 0$ for every $j \in Q$.
\end{proof}

\begin{lemma}
\label{lem:foc}
Let $\mathbf{b} = (\mathbf{b}_1, \mathbf{b}_2)$ be an equilibrium of $\mathcal{M}_{\mathrm{prop}}$ with dual multipliers $\mu_1, \mu_2 \in [0, 1)$ from Lemma~\ref{lem:binding_ros}. For any advertiser $i \in \{1, 2\}$ and any query $j \in Q$ with a strictly positive competing bid $a = b_{-i,j} > 0$, the equilibrium bid $b = b_{i,j} \ge 0$ satisfies
\begin{equation}
\label{eq:foc}
  v_{i,j} = \mu_i b \left( \frac{b}{a} + 2 \right).
\end{equation}
\end{lemma}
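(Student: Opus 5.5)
The plan is to read \eqref{eq:foc} off the first-order condition of the per-query surrogate objective from Lemma~\ref{lem:binding_ros}. Fix $i$ and a query $j$ with $a = b_{-i,j} > 0$, and write $v = v_{i,j}$. Because the competitor bids positively, the query is contested whenever $b>0$. For every $b \ge 0$ the allocation is $x_{i,j}(b,a) = \frac{b}{a+b}$ and the payment is $p_{i,j}(b,a) = \frac{b^2}{a+b}$; at $b=0$ both formulas correctly give $0$. Hence
\[
  f_{i,j}(b) = \frac{v b - \mu_i b^2}{a+b},
\]
which is smooth on $[0,\infty)$ since $a>0$. Lemma~\ref{lem:binding_ros} says the equilibrium bid $b_{i,j}$ maximizes $f_{i,j}$ globally over $b \ge 0$.

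Next I would compute the derivative with the quotient rule:
\[
  f_{i,j}'(b) = \frac{(v - 2\mu_i b)(a+b) - (v b - \mu_i b^2)}{(a+b)^2} = \frac{v a - \mu_i b(2a + b)}{(a+b)^2}.
\]
Setting the numerator to zero gives $v a = \mu_i b(2a+b)$. Dividing by $a>0$ yields exactly $v = \mu_i b\left(\frac{b}{a} + 2\right)$. It therefore remains to show that $f_{i,j}'(b_{i,j}) = 0$, or that the identity holds directly in boundary cases.

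I would split into two cases.

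\emph{Case $v > 0$.} Lemma~\ref{lem:pos_bid} gives $b_{i,j} > 0$, so $b_{i,j}$ is an interior maximizer of a differentiable function on $[0,\infty)$. The first-order condition $f_{i,j}'(b_{i,j}) = 0$ then holds and gives \eqref{eq:foc}. This also shows $\mu_i > 0$ here. Indeed, if $\mu_i = 0$ then $f_{i,j}' = \frac{va}{(a+b)^2} > 0$ everywhere, so $f_{i,j}$ would have no maximizer, contradicting Lemma~\ref{lem:binding_ros}.

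\emph{Case $v = 0$.} Here $f_{i,j}(b) = -\mu_i \frac{b^2}{a+b}$.
\begin{itemize}
\item If $\mu_i > 0$, then $f_{i,j}$ is strictly negative for $b>0$. The unique maximizer is $b_{i,j}=0$, and both sides of \eqref{eq:foc} equal $0$.
\item If $\mu_i = 0$, the right-hand side $\mu_i b(\frac{b}{a}+2)$ is $0 = v$ for every $b$, so \eqref{eq:foc} holds trivially.
\end{itemize}

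The only delicate point is interiority, which is needed to legitimately use the first-order condition. This is exactly what Lemma~\ref{lem:pos_bid} supplies when $v>0$, while the case $v=0$ is checked by hand. The rest is a one-line derivative computation.
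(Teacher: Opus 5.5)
Your proof is correct and follows the paper's argument: both use Lemma~\ref{lem:binding_ros} to get that $b_{i,j}$ globally maximizes $f_{i,j}$, compute the same derivative, and use Lemma~\ref{lem:pos_bid} to justify interiority. The only difference is the case split. You split on $v_{i,j}>0$ versus $v_{i,j}=0$ and settle the latter by hand via the sign of $\mu_i$. The paper splits on $b=0$ versus $b>0$, so the first-order condition automatically covers the case $v_{i,j}=0$, $b>0$.
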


\begin{proof}
If $b = 0$, then $v_{i,j} = 0$ by Lemma~\ref{lem:pos_bid}, so both sides of Eq.~\eqref{eq:foc} equal $0$. Now suppose $b > 0$. By Lemma~\ref{lem:binding_ros}, $b$ maximizes $f_{i,j}(b) = v_{i,j} \frac{b}{a + b} - \mu_i \frac{b^2}{a + b}$ over $[0, \infty)$. Differentiating $f_{i,j}(b)$ with respect to $b$ gives
\begin{equation}
\label{eq:first_deriv}
  f'_{i,j}(b) = v_{i,j} \frac{a}{(a + b)^2} - \mu_i \frac{b(b + 2a)}{(a + b)^2} = \frac{v_{i,j} a - \mu_i(b^2 + 2ab)}{(a + b)^2}.
\end{equation}
Since $b > 0$ is an interior maximizer, we have $f'_{i,j}(b) = 0$, which yields $v_{i,j} a = \mu_i b(b + 2a)$, or equivalently $v_{i,j} = \mu_i b \left(\frac{b}{a} + 2\right)$.
\end{proof}

We now prove our main technical inequality: a pointwise upper bound on $\max(v_{1,j}, v_{2,j})$ for every query $j \in Q$.

\begin{lemma}
\label{lem:pointwise}
For any equilibrium bid pair $(b_1, b_2) \in \R_{\ge 0}^2$ on a single query with dual multipliers $\mu_1, \mu_2 \ge 0$, expected payments $p_1, p_2$, and expected surpluses $\Delta_1, \Delta_2$, the following pointwise inequality holds:
\begin{equation}
\label{eq:pointwise_bound}
  \max(v_1, v_2) \le \frac{3}{2}(p_1 + p_2) + (1 + 3\mu_1)\Delta_1 + (1 + 3\mu_2)\Delta_2.
\end{equation}
\end{lemma}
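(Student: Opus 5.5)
The plan is to dispose of the degenerate (uncontested) cases directly, and then, on a contested query, use the first-order condition of Lemma~\ref{lem:foc} to write every quantity in \eqref{eq:pointwise_bound} as an explicit function of the bid ratio and the multipliers. That reduces the claim to an elementary polynomial inequality.

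\emph{Degenerate cases.} If $b_1=b_2=0$, then $v_1=v_2=0$ by Lemma~\ref{lem:pos_bid}, and both sides vanish. If exactly one bid is positive, say $b_1>0=b_2$, then $v_2=0$ by Lemma~\ref{lem:pos_bid}, and bidder~1 wins for free. So $p_1=p_2=0$, $\Delta_2=0$ and $\Delta_1=v_1$, and the right-hand side equals $(1+3\mu_1)v_1\ge v_1=\max(v_1,v_2)$.

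\emph{Contested case, setup.} Assume $b_1,b_2>0$. Since both sides of \eqref{eq:pointwise_bound} are positively homogeneous in $(b,v)$, normalize $b_2=1$ and write $t=b_1$. Then $x_1=\tfrac{t}{1+t}$, $x_2=\tfrac{1}{1+t}$, $p_1=tx_1$ and $p_2=x_2$. Lemma~\ref{lem:foc} gives $v_1=\mu_1 t(t+2)$ and $v_2=\mu_2\tfrac1t(\tfrac1t+2)$. Using $p_i+\Delta_i=x_iv_i$, the right-hand side becomes
\[
x_1v_1+x_2v_2+\tfrac12(p_1+p_2)+3\mu_1\Delta_1+3\mu_2\Delta_2 .
\]
Each surplus term is a convex quadratic in its multiplier:
\[
3\mu_1\Delta_1 = 3x_1 t\,\mu_1\bigl(\mu_1(t+2)-1\bigr),\qquad 3\mu_2\Delta_2 = 3x_2 t^{-1}\mu_2\bigl(\mu_2(t^{-1}+2)-1\bigr).
\]

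\emph{Contested case, reduction.} By symmetry (swap labels and $t\leftrightarrow 1/t$) assume $v_1\ge v_2$. The target then reads
\[
x_2 v_1 \le x_2 v_2 + \tfrac12(p_1+p_2)+3\mu_1\Delta_1+3\mu_2\Delta_2 .
\]
After substitution this is a polynomial inequality in $(t,\mu_1,\mu_2)$, subject to the side constraint $\mu_1 t(t+2)\ge \mu_2 t^{-1}(t^{-1}+2)$. I would proceed in two steps.
\begin{enumerate}
\item Eliminate $\mu_2$. The $\mu_2$-dependent part of the right-hand side is $x_2v_2+3\mu_2\Delta_2$, a quadratic in $\mu_2$ whose leading coefficient is positive. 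Minimize it over $\mu_2\ge0$: either the vertex lies in the allowed range, or the minimum sits at an endpoint (including $\mu_2=0$, where the term is $0$).
\item Handle $\mu_1$. What remains is a quadratic in $\mu_1$ with constant term $\tfrac12(p_1+p_2)>0$. The $\mu_1$-linear part of (RHS$-$LHS) is $-x_2v_1-3x_1t\mu_1$, i.e.\ $-\mu_1\,x_1(t+5)$. The quadratic coefficient is $3x_1t(t+2)$. It therefore suffices to check a discriminant condition:
\[
x_1^2(t+5)^2\le 4\cdot 3x_1t(t+2)\cdot \tfrac12(p_1+p_2).
\]
\end{enumerate}

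\emph{Main obstacle.} The delicate step is this final one-variable verification, with the $\mu_2$-minimization folded in. The discriminant bound without using the $\mu_2$ term may fail for small $t$, because then bidder~1 has the smaller bid and $v_1\ge v_2$ forces $\mu_1$ to be large relative to $\mu_2$. In that regime I would first try to exploit the side constraint $\mu_1 t(t+2)\ge\mu_2 t^{-1}(t^{-1}+2)$ together with the nonnegativity of the $\mu_2$-part. If the clean discriminant check fails, I would instead verify the resulting rational inequality in $t$ directly, clearing denominators and checking that the polynomial is nonnegative on $t>0$. I expect equality near $t=1$ with $\mu$'s giving the $1.5$ constant, consistent with the tightness example.
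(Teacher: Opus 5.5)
Your strategy is the paper's: normalize $b_2=1$, substitute Lemma~\ref{lem:foc}, and minimize the resulting quadratics in $\mu_1,\mu_2$. But the formulas you wrote down break the decisive step, and the step itself is never carried out.

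\textbf{The algebra slips.} First, with $b_2=1$ and $b_1=t$, Lemma~\ref{lem:foc} gives $v_2=\mu_2 b_2(b_2/b_1+2)=\mu_2(t^{-1}+2)$, not $\mu_2t^{-1}(t^{-1}+2)$. Correspondingly, $3\mu_2\Delta_2=3x_2\mu_2\bigl(\mu_2(t^{-1}+2)-1\bigr)$. Your expression carries an extra factor $t^{-1}$ and is inconsistent with your own $p_2=x_2$.

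Second, $x_2v_1=\mu_1x_1(t+2)$ and $3x_1t\mu_1$ sum to $\mu_1x_1(4t+2)$, not $\mu_1x_1(t+5)$; you dropped the factor $t$ in the second term. With the correct coefficient, your discriminant condition reads $2(2t+1)^2\le 3(t+2)(t^2+1)$, i.e.\ $(t-1)^2(3t+4)\ge 0$, which holds for every $t>0$. So the small-$t$ obstruction you anticipate is an artifact of the slip, and the side constraint $v_1\ge v_2$ is not needed.

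The side constraint could not have rescued you anyway. At $\mu_2=0$ it is vacuous, while your version $(t+5)^2\le 6(t+2)(t^2+1)$ is equivalent to $(t-1)(6t^2+17t+13)\ge 0$. That fails for every $t<1$, so the inequality as you set it up is false there.

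\textbf{The regime you overlooked.} The case that genuinely needs care is $t>1$, and your step~2 misses it. There the $\mu_2$-part $x_2v_2+3\mu_2\Delta_2=x_2\bigl[3(t^{-1}+2)\mu_2^2+(t^{-1}-1)\mu_2\bigr]$ has a negative linear coefficient. After minimizing over $\mu_2\ge 0$, the constant term left in the $\mu_1$-quadratic is strictly below $\tfrac12(p_1+p_2)$, so a bare discriminant check does not suffice; you need a quantitative margin.

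That margin is available. Minimizing over $\mu_1$ leaves $\frac{(t-1)^2(3t+4)}{6(t+1)(t+2)}\ge\frac{(t-1)^2}{3(t+1)}$. The $\mu_2$-part is nonnegative for $t\le 1$, and for $t>1$ it is at least $-\frac{(t-1)^2}{12t(t+1)(2t+1)}\ge-\frac{(t-1)^2}{36(t+1)}$. The sum is therefore nonnegative.

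This is exactly the paper's computation after clearing the denominator $z(z+1)$, which yields $F\ge\frac{7}{24}z(z-1)^2$. The paper shows the right-hand side of \eqref{eq:pointwise_bound} dominates $v_1$ for all $z>0$ and all $\mu_1,\mu_2\ge 0$, with no side constraint. It then obtains the bound for $v_2$ by symmetry.
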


\begin{proof}
We consider two cases depending on whether the query is contested.

\paragraph{Case 1: Uncontested query ($b_1 = 0$ or $b_2 = 0$).}
If $b_1 = b_2 = 0$, then $v_1 = v_2 = 0$ by Lemma~\ref{lem:pos_bid}, so both sides of Eq.~\eqref{eq:pointwise_bound} equal $0$. If $b_1 > 0$ and $b_2 = 0$ (or symmetrically $b_2 > 0$ and $b_1 = 0$), then by Lemma~\ref{lem:pos_bid} and Eq.~\eqref{eq:payment} we have $v_2 = 0$, $x_1(b_1, 0) = 1$, $p_1 = p_2 = 0$, $\Delta_1 = v_1 \ge 0$, and $\Delta_2 = 0$; since $\mu_1 \ge 0$, $\max(v_1, v_2) = v_1 = \Delta_1 \le (1 + 3\mu_1)\Delta_1$, so Eq.~\eqref{eq:pointwise_bound} holds.

\paragraph{Case 2: Contested query ($b_1 > 0$ and $b_2 > 0$).}
Because the right-hand side of Eq.~\eqref{eq:pointwise_bound} is symmetric in the indices $1$ and $2$, it suffices to prove that
\begin{equation}
\label{eq:gap_D_def}
  D := \frac{3}{2}(p_1 + p_2) + (1 + 3\mu_1)\Delta_1 + (1 + 3\mu_2)\Delta_2 - v_1 \ge 0
\end{equation}
holds for all $b_1, b_2 > 0$ and $\mu_1, \mu_2 \ge 0$; index symmetry then gives the same bound for $v_2$. Let $z = b_1 / b_2 > 0$, so that $b_1 = z b_2$. Then the allocation probabilities and expected payments are
\[
  x_1 = \frac{b_1}{b_1 + b_2} = \frac{z}{z + 1}, \quad
  x_2 = \frac{b_2}{b_1 + b_2} = \frac{1}{z + 1}, \quad
  p_1 = x_1 b_1 = b_2 \frac{z^2}{z + 1}, \quad
  p_2 = x_2 b_2 = b_2 \frac{1}{z + 1},
\]
and by Lemma~\ref{lem:foc}, the values $v_1$ and $v_2$ are
\[
  v_1 = \mu_1 b_1 \left(\frac{b_1}{b_2} + 2\right) = b_2 \mu_1 z(z + 2), \qquad
  v_2 = \mu_2 b_2 \left(\frac{b_2}{b_1} + 2\right) = b_2 \mu_2 \left(\frac{1}{z} + 2\right) = b_2 \mu_2 \frac{2z + 1}{z}.
\]
Since $p_1, p_2, v_1, v_2$ (and hence $\Delta_i = x_i v_i - p_i$) all have a common factor of $b_2 > 0$, we may set $b_2 = 1$ without loss of generality.
Next, substituting $\Delta_1 = x_1 v_1 - p_1$ and $\Delta_2 = x_2 v_2 - p_2$ into Eq.~\eqref{eq:gap_D_def} and grouping the $-p_1$ and $-p_2$ terms with $\frac{3}{2}(p_1 + p_2)$ yields
\begin{align*}
  D &= \frac{3}{2}(p_1 + p_2) + (1 + 3\mu_1)(x_1 v_1 - p_1) + (1 + 3\mu_2)(x_2 v_2 - p_2) - v_1 \\
    &= \frac{1}{2}(p_1 + p_2) + \bigl[(1 + 3\mu_1)x_1 v_1 - v_1 - 3\mu_1 p_1\bigr] + \bigl[(1 + 3\mu_2)x_2 v_2 - 3\mu_2 p_2\bigr].
\end{align*}
We now express each of the three terms on the right-hand side as a rational function of $z$. First, summing the expected payments gives $\frac{1}{2}(p_1 + p_2) = \frac{z^2 + 1}{2(z + 1)}$. Second, substituting $x_1 = \frac{z}{z + 1}$, $p_1 = \frac{z^2}{z + 1}$, and $v_1 = \mu_1 z(z + 2)$ into the first bracket gives
\begin{align*}
  (1 + 3\mu_1)x_1 v_1 - v_1 - 3\mu_1 p_1
  &= \left(\frac{(1 + 3\mu_1)z}{z + 1} - 1\right) v_1 - \frac{3\mu_1 z^2}{z + 1} \\
  &= \frac{(3\mu_1 z - 1)\mu_1 z(z + 2) - 3\mu_1 z^2}{z + 1} \\
  &= \frac{3z^2(z + 2)\mu_1^2 - 2z(2z + 1)\mu_1}{z + 1}.
\end{align*}
Third, substituting $x_2 = \frac{1}{z + 1}$, $p_2 = \frac{1}{z + 1}$, and $v_2 = \mu_2 \frac{2z + 1}{z}$ into the second bracket gives
\begin{align*}
  (1 + 3\mu_2)x_2 v_2 - 3\mu_2 p_2
  &= \frac{(1 + 3\mu_2)v_2 - 3\mu_2}{z + 1} \\
  &= \frac{(1 + 3\mu_2)\mu_2(2z + 1) - 3\mu_2 z}{z(z + 1)} \\
  &= \frac{3(2z + 1)\mu_2^2 - (z - 1)\mu_2}{z(z + 1)}.
\end{align*}
Multiplying $D$ by the common denominator $z(z + 1) > 0$ and defining $F(z, \mu_1, \mu_2) := z(z + 1) D$, we obtain
\begin{equation}
\label{eq:F_decomp}
  F(z, \mu_1, \mu_2) = \frac{1}{2}z(z^2 + 1) + h_1(z, \mu_1) + h_2(z, \mu_2),
\end{equation}
where
\begin{align*}
  h_1(z, \mu_1) &= 3z^3(z + 2)\mu_1^2 - 2z^2(2z + 1)\mu_1, \\
  h_2(z, \mu_2) &= 3(2z + 1)\mu_2^2 - (z - 1)\mu_2.
\end{align*}
Completing the square in $\mu_1$ gives
\[
  h_1(z, \mu_1) = 3z^3(z + 2)\left(\mu_1 - \frac{2z + 1}{3z(z + 2)}\right)^2 - \frac{z(2z + 1)^2}{3(z + 2)} \ge -\frac{z(2z + 1)^2}{3(z + 2)}.
\]
Combining this lower bound with the first term of Eq.~\eqref{eq:F_decomp}, and noting that $3z + 4 = 2(z + 2) + z > 2(z + 2)$ for all $z > 0$, we obtain
\begin{align}
  \frac{1}{2}z(z^2 + 1) + h_1(z, \mu_1)
  &\ge \frac{z\bigl[3(z + 2)(z^2 + 1) - 2(2z + 1)^2\bigr]}{6(z + 2)} \nonumber \\
  &= \frac{z(3z^3 - 2z^2 - 5z + 4)}{6(z + 2)} \nonumber \\
  &= \frac{z(z - 1)^2(3z + 4)}{6(z + 2)} \ge \frac{1}{3}z(z - 1)^2. \label{eq:h1_combined}
\end{align}
To bound $h_2(z, \mu_2)$, observe that $z(z - 1) - (z - 1) = (z - 1)^2 \ge 0$, so $z - 1 \le z(z - 1)$ for all $z > 0$. Since $\mu_2 \ge 0$ and $3(2z + 1) = 6z + 3 > 6z$, completing the square in $\mu_2$ gives
\begin{align}
  h_2(z, \mu_2) = 3(2z + 1)\mu_2^2 - (z - 1)\mu_2
  &\ge 6z\mu_2^2 - z(z - 1)\mu_2 \nonumber \\
  &= 6z\left(\mu_2 - \frac{z - 1}{12}\right)^2 - \frac{1}{24}z(z - 1)^2 \ge -\frac{1}{24}z(z - 1)^2. \label{eq:h2_bound}
\end{align}
Summing Eqs.~\eqref{eq:h1_combined} and~\eqref{eq:h2_bound} yields
\[
  F(z, \mu_1, \mu_2) \ge \left(\frac{1}{3} - \frac{1}{24}\right)z(z - 1)^2 = \frac{7}{24}z(z - 1)^2 \ge 0
\]
for all $z > 0$ and $\mu_1, \mu_2 \ge 0$, which proves $D \ge 0$.
\end{proof}

\begin{theorem}
\label{thm:main}
In the two-bidder Proportional First-Price Auction ($\pFPA_1$), the Price of Anarchy is at most $1.5$.
\end{theorem}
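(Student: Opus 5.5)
We fix an arbitrary instance $\mathcal{I}$ and an equilibrium $\mathbf{b}$ of $\mathcal{M}_{\mathrm{prop}}$. The proof sums the pointwise inequality of Lemma~\ref{lem:pointwise} over all queries and then uses complementary slackness to remove the dual multipliers. We work under the normalization $T_i = 1$ of Remark~\ref{rem:normalization}. We also assume $\LW(\ALG) > 0$; this holds because some query has positive value, and by Lemma~\ref{lem:pos_bid} that bidder bids positively and so wins with positive probability.

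First, we invoke Lemma~\ref{lem:binding_ros} to obtain multipliers $\mu_1, \mu_2 \in [0,1)$. They satisfy $\mu_i \sum_{j} \Delta_{i,j} = 0$, and each $b_{i,j}$ maximizes the surrogate $f_{i,j}$. Lemma~\ref{lem:foc} then expresses $v_{i,j}$ in terms of the bids on every contested query. These are exactly the hypotheses under which Lemma~\ref{lem:pointwise} was proved, so for every $j \in Q$ we have
\[
\max(v_{1,j}, v_{2,j}) \le \tfrac{3}{2}(p_{1,j}+p_{2,j}) + (1+3\mu_1)\Delta_{1,j} + (1+3\mu_2)\Delta_{2,j}.
\]
The uncontested cases are already handled inside Lemma~\ref{lem:pointwise} via Lemma~\ref{lem:pos_bid}.

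Next, we sum over $j \in Q$. By Eq.~\eqref{eq:lw_opt}, the left side becomes $\LW(\OPT)$. On the right, the multiplier terms vanish: $3\mu_i \sum_j \Delta_{i,j} = 0$ by Eq.~\eqref{eq:comp_slack}. This leaves
\[
\LW(\OPT) \le \sum_{i=1}^2 \Bigl(\tfrac{3}{2} S_i + (V_i - S_i)\Bigr) = \sum_{i=1}^2 \Bigl(V_i + \tfrac{1}{2} S_i\Bigr).
\]
Finally, the ROS constraint gives $S_i \le V_i$ for each $i$. Hence $\LW(\OPT) \le \tfrac{3}{2}\sum_i V_i = \tfrac{3}{2}\LW(\ALG)$. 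Since this holds for every equilibrium of every instance, Definition~\ref{def:poa} yields $\PoA(\pFPA_1) \le 1.5$.

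The only step needing care is that Lemma~\ref{lem:pointwise} must be applied with the same multipliers $\mu_i$ that appear in complementary slackness. This is fine because that lemma holds for any $\mu_1, \mu_2 \ge 0$ provided the first-order relations of Lemma~\ref{lem:foc} hold with those $\mu_i$, and Lemma~\ref{lem:binding_ros} supplies both facts simultaneously. With that checked, the rest is direct summation.
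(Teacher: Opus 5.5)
Your proposal is correct and matches the paper's proof: obtain the multipliers from Lemma~\ref{lem:binding_ros}, sum Lemma~\ref{lem:pointwise} over all queries, use complementary slackness to eliminate the $3\mu_i\sum_j\Delta_{i,j}$ terms, and finish with $S_i \le V_i$. Your two extra checks are also correct and only make explicit what the paper leaves implicit: that the same $\mu_i$ must drive both the first-order relations and complementary slackness, and that $\LW(\ALG)>0$, so the ratio is well defined.
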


\begin{proof}
Let $\mathbf{b} = (\mathbf{b}_1, \mathbf{b}_2)$ be any equilibrium of $\pFPA_1$. By Lemma~\ref{lem:binding_ros}, there exist dual multipliers $\mu_1, \mu_2 \in [0, 1)$ satisfying complementary slackness $\mu_i \bigl(V_i(\mathbf{b}) - S_i(\mathbf{b})\bigr) = 0$ for each $i \in \{1, 2\}$. Summing the pointwise inequality of Lemma~\ref{lem:pointwise} over all queries $j \in Q$, and using $\sum_{j \in Q} p_{i,j} = S_i(\mathbf{b})$, $\sum_{j \in Q} \Delta_{i,j} = V_i(\mathbf{b}) - S_i(\mathbf{b})$, and the ROS constraint $S_i(\mathbf{b}) \le V_i(\mathbf{b})$, we obtain
\begin{align*}
  \LW(\OPT) = \sum_{j \in Q} \max(v_{1,j}, v_{2,j})
  &\le \sum_{i=1}^2 \left[ \frac{3}{2} \sum_{j \in Q} p_{i,j} + (1 + 3\mu_i) \sum_{j \in Q} \Delta_{i,j} \right] \\
  &= \sum_{i=1}^2 \left[ \frac{3}{2} S_i(\mathbf{b}) + \bigl(V_i(\mathbf{b}) - S_i(\mathbf{b})\bigr) + 3\mu_i \bigl(V_i(\mathbf{b}) - S_i(\mathbf{b})\bigr) \right] \\
  &= \sum_{i=1}^2 \left[ V_i(\mathbf{b}) + \frac{1}{2} S_i(\mathbf{b}) \right] \le \frac{3}{2} \sum_{i=1}^2 V_i(\mathbf{b}) = \frac{3}{2}\,\LW(\ALG). \qedhere
\end{align*}
\end{proof}

\section{Two-Bidder Lower Bound: \texorpdfstring{$\PoA \ge 1.5$}{PoA >= 1.5}}
\label{sec:lower_bound}

We now show that our upper bound of $1.5$ for $\mathcal{M}_{\mathrm{prop}}$ is not merely tight for Proportional FPA, but cannot be improved by any two-bidder mechanism satisfying mild assumptions. As discussed in the introduction, our construction adapts the lower bound of Liaw, Mehta, and Perlroth~\cite[Theorem~6.2]{liaw2023efficiency}.

Following the general mechanism framework of Liaw, Mehta, and Perlroth~\cite[Section~6]{liaw2023efficiency}, let $\mathcal{M} = (\pi, c)$ be any two-bidder single-slot auction mechanism defined by an allocation rule $\pi = (\pi_1, \pi_2) : \R_{\ge 0}^2 \to [0, 1]^2$ with $\pi_1(b_1, b_2) + \pi_2(b_1, b_2) \le 1$ and a conditional price rule $c = (c_1, c_2) : \R_{\ge 0}^2 \to \R_{\ge 0}^2$ (so bidder $i$ wins with probability $\pi_i(b_1, b_2)$ and pays expected cost $p_i(b_1, b_2) = \pi_i(b_1, b_2) c_i(b_1, b_2)$). We assume $\mathcal{M}$ satisfies three properties:
\begin{enumerate}
  \item \textbf{Monotone Allocation:} For each $i \in \{1, 2\}$, $\pi_i(b_1, b_2)$ is non-decreasing in $b_i$, a zero bid does not win against a positive bid ($\pi_i(0, b_{-i}) = 0$ for $b_{-i} > 0$), and the maximum uncontested allocation $\pi_0 := \max_{b > 0} \pi_1(b, 0) > 0$ is attained.
  \item \textbf{Anonymity:} For all $(b_1, b_2) \in \R_{\ge 0}^2$, $\pi_1(b_1, b_2) = \pi_2(b_2, b_1)$ and $c_1(b_1, b_2) = c_2(b_2, b_1)$.
  \item \textbf{Monotone Conditional Price at a Tie:} An uncontested positive bid pays $c_i(b_i, 0) = 0$, and there exists a symmetric positive bid $B > 0$ with $v := c_1(B, B) = c_2(B, B) > 0$ and $\pi^* := \pi_1(B, B) = \pi_2(B, B) > 0$ such that bidding strictly above $B$ against $B$ does not decrease the conditional price ($c_i(b'_i, B) \ge c_i(B, B) = v$ for all $b'_i > B$ with $\pi_i(b'_i, B) > \pi^*$).
\end{enumerate}
Note that these assumptions (which are implied by Assumption~6.1 of \cite{liaw2023efficiency}) are satisfied by \emph{every} standard anonymous auction mechanism---including $\pFPA$, $\FPA$, $\SPA$, $\rFPA(\alpha)$ for any $\alpha \ge 1$, $\rTruth(\alpha)$ for any $\alpha > 1$ \cite{mehta2022auction, liaw2023efficiency}, and proportional style mechanisms $\pi_i = \frac{b_i^r}{b_1^r + b_2^r}$ with $r > 0$---all of which satisfy $\pi_0 = 1$. The uncontested zero payment assumption $c_i(b_i, 0) = 0$ in Property~(3) is necessary for the existence of exact equilibria when some valuations are zero: without it, a sole positive bidder on a query ($b_{i,j} > 0$, $b_{-i,j} = 0$) would want to lower $b_{i,j} \to 0^+$ to reduce spend while retaining the full allocation, precluding equilibrium existence.

\begin{theorem}
\label{thm:lower_bound}
Let $\mathcal{M} = (\pi, c)$ be any two-bidder mechanism satisfying Properties (1)--(3) above, let $B > 0$, $v = c_1(B, B) > 0$, $\pi^* = \pi_1(B, B) \in (0, 1/2]$, and let $B_0 > 0$ satisfy $\pi_1(B_0, 0) = \pi_0$. Then for every $\epsilon \in (0, v/2)$, on the three-query instance $\mathcal{I}_\epsilon$ with values
\[
\begin{aligned}
  v_{1,1} &= \epsilon, \quad v_{2,1} = v - \epsilon \quad (\text{Query~1}), \\
  v_{1,2} &= \pi^*(v - \epsilon)/\pi_0, \quad v_{2,2} = 0 \quad (\text{Query~2}), \\
  v_{1,3} &= 0, \quad v_{2,3} = \pi^*\epsilon/\pi_0 \quad (\text{Query~3}),
\end{aligned}
\]
the bid profile $\mathbf{b} = ((B, B), (B_0, 0), (0, B_0))$ is an equilibrium of $\mathcal{M}$ with
\begin{equation}
\label{eq:universal_lb_ratio}
  \frac{\LW(\OPT)}{\LW(\ALG)} = \frac{1}{2\pi^*} + \frac{1}{2\pi_0} - \frac{\epsilon}{2\pi^* v} \ge \frac{3}{2} - \frac{\epsilon}{2\pi^* v},
\end{equation}
and hence $\PoA(\mathcal{M}) \ge 1.5 - \frac{\epsilon}{2\pi^* v}$.
\end{theorem}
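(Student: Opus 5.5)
The plan is to verify the claimed profile directly. I would first compute every allocation, payment, value and spend under $\mathbf{b}$. Then I would check ROS feasibility and best responses for both bidders against all joint deviations across the three queries. Finally I would read off the welfare ratio.

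\textbf{Step 1 (bookkeeping).} On Query~1 the bids are $(B,B)$, so by anonymity each bidder wins with probability $\pi^*$ and pays expected cost $\pi^* v$. On Query~2 bidder~2 bids $0$, so by Property~(1) bidder~2 gets nothing. Bidder~1 wins with probability $\pi_0$ at conditional price $c_1(B_0,0)=0$ by Property~(3), collecting value $\pi_0 \cdot \pi^*(v-\epsilon)/\pi_0=\pi^*(v-\epsilon)$. Query~3 is symmetric, and bidder~2 collects $\pi^*\epsilon$ for free. Hence
\[
V_1=\pi^*\epsilon+\pi^*(v-\epsilon)=\pi^* v=S_1,\qquad V_2=\pi^*(v-\epsilon)+\pi^*\epsilon=\pi^* v=S_2,
\]
so both ROS constraints hold with equality and $\LW(\ALG)=2\pi^* v$. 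Since $\epsilon<v/2$, the larger value on Query~1 is $v-\epsilon$, so
\[
\LW(\OPT)=(v-\epsilon)+\pi^*(v-\epsilon)/\pi_0+\pi^*\epsilon/\pi_0 = v-\epsilon+\pi^* v/\pi_0 .
\]
Dividing gives exactly the expression in Eq.~\eqref{eq:universal_lb_ratio}. Using $\pi^*\le 1/2$ and $\pi_0\le 1$ gives the lower bound $\tfrac{3}{2}-\tfrac{\epsilon}{2\pi^* v}$.

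\textbf{Step 2 (equilibrium).} Fix bidder~1 and consider an arbitrary deviation $(b'_1,b'_2,b'_3)$, with bidder~2's bids held fixed.
\begin{itemize}
\item On Query~3, bidder~1 has value $0$ and nonnegative payments, so any bid there can only lower $V_1$ and raise $S_1$.
\item On Query~2, bidder~1 is uncontested. Its allocation is at most $\pi_0$ and its payment is $0$, so it already has the maximal value there at zero cost.
\item On Query~1, write $\pi'=\pi_1(b'_1,B)$. If $\pi'\le\pi^*$, the total value cannot exceed the current $V_1$, so there is no strict improvement. If $\pi'>\pi^*$, monotonicity of $\pi_1$ in its own bid forces $b'_1>B$, and then Property~(3) gives conditional price $c\ge v$. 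The slack is therefore at most
\[
\pi^*(v-\epsilon)+\pi'(\epsilon-c)\le \pi^*(v-\epsilon)-\pi'(v-\epsilon)<0,
\]
which violates ROS.
\end{itemize}
The argument for bidder~2 is the mirror image. Query~2 has value $0$ for bidder~2, and Query~3 already gives the maximal free allocation. On Query~1, any $\pi'>\pi^*$ yields slack at most $\pi^*\epsilon+\pi'(v-\epsilon-v)=(\pi^*-\pi')\epsilon<0$. So no feasible deviation strictly increases either bidder's value, and $\mathbf{b}$ is an equilibrium.

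\textbf{Main obstacle.} The delicate point is the Query~1 deviation analysis, because Property~(3) only controls prices for bids strictly above $B$ that strictly increase the allocation. I would handle this by splitting on whether $\pi'>\pi^*$. If $\pi'>\pi^*$, monotone allocation forces $b'_1>B$, so Property~(3) applies. If $\pi'\le\pi^*$, no value gain is possible regardless of price. The remaining observation is that Queries~2 and~3 already exhaust the attainable slack: each bidder's free surplus there is exactly what its Query~1 purchase consumes, and this is why the third query is needed.
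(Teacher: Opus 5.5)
Your proof is correct and follows essentially the same route as the paper: bookkeeping that shows both ROS constraints are tight, then for each bidder bounding the free slack from Queries~2 and~3 and splitting the Query~1 deviation on whether the new allocation exceeds $\pi^*$ (monotonicity forces $b'>B$, so Property~(3) gives conditional price at least $v$), and finally reading off the welfare ratio. The paper uses the same case split and slack inequalities, and it invokes anonymity for bidder~2 exactly as your ``mirror image'' step does.
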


Since Theorem~\ref{thm:lower_bound} holds for every $\epsilon \in (0, v/2)$, taking $\epsilon \to 0^+$ shows that $\PoA(\mathcal{M}) \ge 1.5$, so the $1.5$ upper bound of Theorem~\ref{thm:main} cannot be improved.

While the intuitive 2-query construction in the introduction illustrates the welfare gap, Theorem~\ref{thm:lower_bound} requires a third query to establish equilibrium existence across general mechanisms $\mathcal{M}$ (such as SPA). Under mechanisms where bidding above a tie does not increase the conditional price ($c_i(b_i', B) = v$), setting $v_{2,1} = v$ would allow bidder 2 to deviate to $b_2' > B$ and win the query entirely without violating their ROS constraint. Setting $v_{2,1} = v - \epsilon$ ensures that winning more of query 1 incurs a strict per-win deficit of $\epsilon$, preventing upward deviations. Query 3 then provides an uncontested surplus of $\pi^*\epsilon$, funding this baseline deficit at the tie $(B, B)$ without providing any excess budget to bid higher.

\begin{proof}[Proof of Theorem~\ref{thm:lower_bound}]
By anonymity and feasibility, $2\pi^* = \pi_1(B, B) + \pi_2(B, B) \le 1$, so $\pi^* \in (0, 1/2]$. By Property~(1), a zero bid does not win against a positive bid ($\pi_i(0, b_{-i}) = 0$ for $b_{-i} > 0$), and $\pi_1(B_0, 0) = \pi_0 = \max_{b > 0} \pi_1(b, 0)$, so $\pi_1(b, 0) \le \pi_0$ for all $b > 0$. Under the bid profile $\mathbf{b} = ((B, B), (B_0, 0), (0, B_0))$:
\begin{itemize}
  \item \textbf{On Query~1 ($b_{1,1} = b_{2,1} = B$):} Each bidder $i \in \{1, 2\}$ wins with probability $\pi_i(B, B) = \pi^*$ and pays conditional price $c_i(B, B) = v$, for an expected payment of $p_i(B, B) = \pi^* v$. Bidder~1 obtains expected value $\pi^* \epsilon$ and net surplus $\Delta_{1,1} = \pi^*(\epsilon - v) < 0$, while Bidder~2 obtains expected value $\pi^*(v - \epsilon)$ and net surplus $\Delta_{2,1} = -\pi^*\epsilon < 0$.
  \item \textbf{On Query~2 ($b_{1,2} = B_0, b_{2,2} = 0$):} Bidder~1 wins with probability $\pi_1(B_0, 0) = \pi_0$ and pays $c_1(B_0, 0) = 0$, obtaining value $\pi_0 v_{1,2} = \pi^*(v - \epsilon)$ and net surplus $\Delta_{1,2} = \pi^*(v - \epsilon)$. Bidder~2 has $\pi_2(B_0, 0) = 0$ by Property~(1), so $\Delta_{2,2} = 0$.
  \item \textbf{On Query~3 ($b_{1,3} = 0, b_{2,3} = B_0$):} By anonymity, $\pi_2(0, B_0) = \pi_1(B_0, 0) = \pi_0$ and $c_2(0, B_0) = c_1(B_0, 0) = 0$. Bidder~2 obtains value $\pi_0 v_{2,3} = \pi^*\epsilon$ and net surplus $\Delta_{2,3} = \pi^*\epsilon$. Bidder~1 has $\pi_1(0, B_0) = 0$ by Property~(1), so $\Delta_{1,3} = 0$.
\end{itemize}
Summing over all three queries, Bidder~1 has total value $V_1(\mathbf{b}) = \pi^*\epsilon + \pi^*(v - \epsilon) = \pi^* v$ and total spend $S_1(\mathbf{b}) = \pi^* v$, and Bidder~2 has total value $V_2(\mathbf{b}) = \pi^*(v - \epsilon) + \pi^*\epsilon = \pi^* v$ and total spend $S_2(\mathbf{b}) = \pi^* v$, so both ROS constraints hold with equality ($V_1 = S_1$ and $V_2 = S_2$).

We now verify that neither bidder can profitably deviate:
\begin{enumerate}
  \item \textbf{Bidder~1 cannot profitably deviate:} For any deviation $(b'_{1,1}, b'_{1,2}, b'_{1,3}) \in \R_{\ge 0}^3$, since $\pi_1(b, 0) \le \pi_0$ for all $b > 0$ and $c_1(b, 0) = 0$ for all $b > 0$, Queries~2 and~3 together provide value at most $\pi_0 v_{1,2} + v_{1,3} = \pi^*(v - \epsilon)$ and zero payment on uncontested queries, contributing at most $\pi^*(v - \epsilon)$ to Bidder~1's net surplus $V'_1 - S'_1$. If $b'_{1,1} > B$ with $\pi_1(b'_{1,1}, B) > \pi^*$, then by Property~(3) we have $c_1(b'_{1,1}, B) \ge c_1(B, B) = v > \epsilon$, so Bidder~1's net deficit on Query~1 strictly exceeds $\pi^*(v - \epsilon)$:
  \[
  \begin{aligned}
    p_1(b'_{1,1}, B) - \epsilon \pi_1(b'_{1,1}, B)
    &= \pi_1(b'_{1,1}, B)\bigl(c_1(b'_{1,1}, B) - \epsilon\bigr) \\
    &\ge \pi_1(b'_{1,1}, B)(v - \epsilon) > \pi^*(v - \epsilon),
  \end{aligned}
  \]
  which implies $S'_1 - V'_1 > \pi^*(v - \epsilon) - \pi^*(v - \epsilon) = 0$, violating Bidder~1's ROS constraint. Otherwise (if $\pi_1(b'_{1,1}, B) \le \pi^*$, which by Property~(1) includes all $b'_{1,1} \le B$), Bidder~1's total value is at most $V'_1 \le \pi^*\epsilon + \pi^*(v - \epsilon) = V_1(\mathbf{b})$. Thus $\mathbf{b}_1 = (B, B_0, 0)$ is a best response for Bidder~1.
  \item \textbf{Bidder~2 cannot profitably deviate:} By anonymity, $\pi_2(0, b) = \pi_1(b, 0) \le \pi_0$, so Queries~2 and~3 together contribute at most $v_{2,2} + \pi_0 v_{2,3} = \pi^*\epsilon$ to Bidder~2's net surplus $V'_2 - S'_2$. If $b'_{2,1} > B$ with $\pi_2(B, b'_{2,1}) > \pi^*$, then by Properties~(2)--(3) we have $c_2(B, b'_{2,1}) = c_1(b'_{2,1}, B) \ge v > v - \epsilon$, so Bidder~2's net deficit on Query~1 strictly exceeds $\pi^*\epsilon$:
  \[
  \begin{aligned}
    p_2(B, b'_{2,1}) - (v - \epsilon) \pi_2(B, b'_{2,1})
    &= \pi_2(B, b'_{2,1})\bigl(c_2(B, b'_{2,1}) - (v - \epsilon)\bigr) \\
    &\ge \pi_2(B, b'_{2,1})\epsilon > \pi^*\epsilon,
  \end{aligned}
  \]
  which implies $S'_2 - V'_2 > \pi^*\epsilon - \pi^*\epsilon = 0$, violating Bidder~2's ROS constraint. Otherwise ($\pi_2(B, b'_{2,1}) \le \pi^*$), Bidder~2's total value satisfies $V'_2 \le \pi^*(v - \epsilon) + \pi^*\epsilon = V_2(\mathbf{b})$. Thus $\mathbf{b}_2 = (B, 0, B_0)$ is a best response for Bidder~2.
\end{enumerate}
Finally, the equilibrium liquid welfare is $\LW(\ALG) = V_1(\mathbf{b}) + V_2(\mathbf{b}) = 2\pi^* v$. Since $\epsilon \in (0, v/2)$, we have $v - \epsilon > \epsilon$, so assigning Query~1 to Bidder~2, Query~2 to Bidder~1, and Query~3 to Bidder~2 achieves $\LW(\OPT) = (v - \epsilon) + \frac{\pi^*(v - \epsilon)}{\pi_0} + \frac{\pi^*\epsilon}{\pi_0} = (v - \epsilon) + \frac{\pi^* v}{\pi_0}$. Since $\pi^* \in (0, 1/2]$ and $\pi_0 \in (0, 1]$, we have
\[
  \frac{\LW(\OPT)}{\LW(\ALG)} = \frac{(v - \epsilon) + \pi^* v / \pi_0}{2\pi^* v} = \frac{1}{2\pi^*} + \frac{1}{2\pi_0} - \frac{\epsilon}{2\pi^* v} \ge \frac{3}{2} - \frac{\epsilon}{2\pi^* v},
\]
completing the proof.
\end{proof}

\section{Multi-Bidder Analysis: \texorpdfstring{$2 - \Theta(1/n)$}{2 - Theta(1/n)} Price of Anarchy}
\label{sec:n_bidders}

We now turn to the general multi-bidder setting with $n \ge 2$ advertisers and resolve the open question from Liaw, Mehta, and Perlroth~\cite{liaw2023efficiency}: while Theorem~6.2 of \cite{liaw2023efficiency} proved that under mild assumptions, any $n$-bidder mechanism has $\PoA \ge \frac{2n+4}{n+4} = 2 - \frac{4}{n+4} = 2 - O(1/n)$, we prove that the $2n$-Proportional First-Price Auction ($\pFPA_{2n}$, i.e., $\mathcal{M}_{\mathrm{prop}}^{(r)}$ with exponent $r = 2n$) achieves $\PoA(\pFPA_{2n}) \le 2 - \frac{1}{4n + 1} = 2 - \Omega(1/n)$ across all undominated bid profiles (and thus all equilibria) for every $n \ge 2$. Our analysis adapts the three-step undominated-bid and value-plus-spend convex-combination approach of Liaw, Mehta, and Perlroth~\cite[Section~5]{liaw2023efficiency} (which builds on Mehta~\cite{mehta2022auction}), extending it from two bidders to $n \ge 2$ bidders via Jensen's inequality.

Following \cite[Lemma~5.5]{liaw2023efficiency}, we first establish a lower bound on any undominated bid in $\pFPA_r$ for any exponent $r > 0$ by comparing marginal value and marginal spend.

\begin{lemma}
\label{lem:n_undom}
Fix $n \ge 2$ and $r > 0$. For any undominated bid profile $\mathbf{b} \in \R_{\ge 0}^{n \times |Q|}$ of $\pFPA_r$, every advertiser $i \in [n]$ and query $j \in Q$ with winning probability $x_{i,j} \in [0, 1]$ satisfies
\begin{equation}
\label{eq:n_undom_bound}
  b_{i,j} \ge \frac{r(1 - x_{i,j})}{r(1 - x_{i,j}) + 1} v_{i,j}.
\end{equation}
\end{lemma}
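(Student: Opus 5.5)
We argue by contradiction, following the marginal-value versus marginal-spend comparison of \cite[Lemma~5.5]{liaw2023efficiency}. Fix an advertiser $i$ and query $j$, and suppose $b := b_{i,j} < \frac{r(1-x)}{r(1-x)+1} v$, where $x = x_{i,j}$ and $v = v_{i,j}$. Then necessarily $v > 0$ and $x < 1$.

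\textbf{Trivial cases.} If the query is uncontested by others, i.e.\ $\sum_{k\ne i} b_{k,j} = 0$, there are two subcases. If $b > 0$, then $x = 1$ and the bound reads $b \ge 0$, which holds. If $b = 0$, raising the bid to any positive value yields $x = 1$ at zero payment. This strictly increases value while only increasing surplus, contradicting undominatedness.

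\textbf{Contested case.} Otherwise set $s := \sum_{k\ne i} b_{k,j}^r > 0$. Consider $i$'s allocation and expected payment as functions of its own bid,
\[
x(t) = \frac{t^r}{t^r+s}, \qquad p(t) = t\,x(t),
\]
with all other bids fixed. The plan is to show that raising the bid slightly from $b$ to $b+\delta$ strictly increases $i$'s value on $j$ while strictly increasing its surplus $x(t)v - p(t)$. The ROS constraint would then remain satisfied, contradicting undominatedness.

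\textbf{Derivatives.} A direct computation gives
\[
x'(t) = \frac{r\,x(t)\bigl(1-x(t)\bigr)}{t} \quad (t > 0),
\]
and therefore
\[
p'(t) = x(t) + t\,x'(t) = x(t)\bigl(1 + r(1-x(t))\bigr).
\]
The marginal surplus is
\[
x'(t)v - p'(t) = x(t)\left[\frac{r(1-x(t))}{t}\,v - \bigl(1 + r(1-x(t))\bigr)\right],
\]
which is strictly positive exactly when
\[
t < \frac{r(1-x(t))}{r(1-x(t))+1}\,v.
\]
At $t = b > 0$ this is precisely our assumption, so for small $\delta > 0$ both value and surplus strictly increase. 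Here $x(b) > 0$ and $x'(b) > 0$ because $0 < x(b) < 1$.

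\textbf{The case $b = 0$.} Here $x = 0$ and the target bound is $\frac{r}{r+1}v$. We instead deviate to a small $t > 0$. We need
\[
x(t)v - t\,x(t) = x(t)(v - t) > 0,
\]
which holds for $t < v$, and $x(t) > 0$. This gives a strict value increase with a strict surplus increase.

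\textbf{Main obstacle.} The obstacle is mild: making the derivative argument rigorous and handling the $b = 0$ endpoint separately, since $x'(t)$ may blow up or vanish near $0$ depending on $r$. The explicit small-bid deviation sidesteps the endpoint. For $b > 0$, strict positivity of the derivative at $b$ together with continuity gives a strict increase on $(b, b+\delta]$ for small $\delta$. Because the other queries are unchanged, aggregate $V_i - S_i$ strictly increases and stays nonnegative, which completes the contradiction.
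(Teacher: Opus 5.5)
Your proof is correct and follows the paper's route. You rule out $b_{i,j}=0$ with an explicit positive-bid deviation, then use $\partial x_{i,j}/\partial b_{i,j} = r\,x_{i,j}(1-x_{i,j})/b_{i,j}$ to show that if the bound failed, a small upward bid change would strictly raise both the value on query $j$ and $V_i - S_i$, contradicting undominatedness; the only cosmetic difference is that you argue by contradiction from the sign of the marginal surplus, whereas the paper states the necessary condition $\partial V_i/\partial b_{i,j} \le \partial S_i/\partial b_{i,j}$ directly.
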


\begin{proof}
If $v_{i,j} = 0$ or $x_{i,j} = 1$, the right-hand side of Eq.~\eqref{eq:n_undom_bound} equals $0$, so Eq.~\eqref{eq:n_undom_bound} holds trivially since $b_{i,j} \ge 0$. Now suppose $v_{i,j} > 0$ and $x_{i,j} \in [0, 1)$. Note first that we cannot have $b_{i,j} = 0$ (which gives $x_{i,j} = 0$), since deviating to $b'_{i,j} = \frac{1}{2}v_{i,j} > 0$ would yield $x'_{i,j} > 0$, strictly increasing advertiser $i$'s value on query $j$ by $x'_{i,j} v_{i,j} > 0$ while increasing net surplus $V_i - S_i$ by $\Delta'_{i,j} \ge \frac{1}{2}x'_{i,j} v_{i,j} > 0$. Thus $b_{i,j} > 0$, and since $x_{i,j} < 1$, we also have $B_{-i,j} := \sum_{k \ne i} b_{k,j}^r > 0$. Differentiating $x_{i,j}(b) = \frac{b^r}{b^r + B_{-i,j}}$ with respect to $b = b_{i,j} > 0$ yields
\[
  \frac{\partial x_{i,j}}{\partial b_{i,j}}
  = \frac{r b_{i,j}^{r-1} B_{-i,j}}{(b_{i,j}^r + B_{-i,j})^2}
  = \frac{r}{b_{i,j}} x_{i,j}(1 - x_{i,j}) > 0.
\]
Consequently, the marginal changes in advertiser $i$'s expected value $V_i$ and expected spend $S_i$ with respect to $b_{i,j}$ are
\[
  \frac{\partial V_i}{\partial b_{i,j}} = v_{i,j} \frac{\partial x_{i,j}}{\partial b_{i,j}}, \qquad
  \frac{\partial S_i}{\partial b_{i,j}} = x_{i,j} + b_{i,j} \frac{\partial x_{i,j}}{\partial b_{i,j}}.
\]
If $\frac{\partial V_i}{\partial b_{i,j}} > \frac{\partial S_i}{\partial b_{i,j}}$, then increasing $b_{i,j}$ by a sufficiently small $\delta > 0$ strictly increases advertiser $i$'s value on query $j$ while also increasing $V_i - S_i$ (preserving the ROS constraint $V_i \ge S_i$), contradicting that $\mathbf{b}$ is undominated. Hence an undominated bid must satisfy $\frac{\partial V_i}{\partial b_{i,j}} \le \frac{\partial S_i}{\partial b_{i,j}}$, which gives
\[
  v_{i,j} \le b_{i,j} \left( 1 + \frac{x_{i,j}}{b_{i,j} (\partial x_{i,j} / \partial b_{i,j})} \right) = b_{i,j} \left( 1 + \frac{1}{r(1 - x_{i,j})} \right) = b_{i,j} \frac{r(1 - x_{i,j}) + 1}{r(1 - x_{i,j})},
\]
proving Eq.~\eqref{eq:n_undom_bound}.
\end{proof}

Next, we use Lemma~\ref{lem:n_undom} together with Jensen's inequality across the $n - 1$ non-optimal bidders to lower-bound the aggregate expected spend $S^{(j)} := \sum_{k=1}^n p_{k,j}$ on query $j$.

\begin{lemma}
\label{lem:n_spend}
Fix a query $j \in Q$, let $i^* \in \arg\max_{i \in [n]} v_{i,j}$ be an optimal bidder on query $j$ with value $v^*_j = v_{i^*,j}$, bid $b^*_j = b_{i^*,j}$, and winning probability $x^*_j = x_{i^*,j} \in (0, 1]$. For any undominated bid profile of $\pFPA_r$, the aggregate spend $S^{(j)} = \sum_{k=1}^n p_{k,j}$ on query $j$ satisfies
\begin{equation}
\label{eq:n_spend_bound}
  S^{(j)} \ge v^*_j \cdot g(x^*_j), \quad \text{where} \quad
  g(x) := \frac{r(1 - x)}{r(1 - x) + 1} \left( x + \frac{(1 - x)^{1 + 1/r}}{(n - 1)^{1/r} x^{1/r}} \right)
\end{equation}
for all $x \in (0, 1]$ (with $g(1) = \lim_{x \to 1^-} g(x) = 0$ by continuity).
\end{lemma}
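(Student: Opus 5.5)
Write $x=x^*_j$. The plan is to split the aggregate spend into the optimal bidder's own payment plus the payments of the other $n-1$ bidders, and to bound each piece in terms of $v^*_j$ and $x$.

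The degenerate cases come first. If $x=1$, then $g(1)=0$ and the claim is trivial. Otherwise $x\in(0,1)$. Then $b^*_j>0$, and some competitor bids positively, so the query is contested and $p_{k,j}=b_{k,j}x_{k,j}$ for every $k$. If $v^*_j=0$ there is nothing to prove, so assume $v^*_j>0$.

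For the optimal bidder, Lemma~\ref{lem:n_undom} gives
\[
p_{i^*,j}=b^*_jx\ge \frac{r(1-x)}{r(1-x)+1}\,v^*_j\,x,
\]
which is the first term of $g$.

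For each other bidder $k\ne i^*$, the allocation rule gives $b_{k,j}=b^*_j(x_{k,j}/x)^{1/r}$. Hence
\[
p_{k,j}=b^*_j\,x^{-1/r}\,x_{k,j}^{1+1/r}.
\]
The map $t\mapsto t^{1+1/r}$ is convex, and the shares satisfy $\sum_{k\ne i^*}x_{k,j}=1-x$. By Jensen's inequality (power mean),
\[
\sum_{k\ne i^*}x_{k,j}^{1+1/r}\ge (n-1)\Bigl(\frac{1-x}{n-1}\Bigr)^{1+1/r}=\frac{(1-x)^{1+1/r}}{(n-1)^{1/r}}.
\]
Therefore
\[
\sum_{k\ne i^*}p_{k,j}\ge b^*_j\,\frac{(1-x)^{1+1/r}}{(n-1)^{1/r}x^{1/r}}.
\]
Applying Lemma~\ref{lem:n_undom} to $b^*_j$ once more turns this into $v^*_j$ times the factor $\frac{r(1-x)}{r(1-x)+1}$ times the second term of $g$. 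Adding the two pieces yields $S^{(j)}\ge v^*_j\,g(x)$.

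No step here is a real obstacle. The points needing care are:
\begin{itemize}
\item The competitors' spend is bounded through $b^*_j$, via the ratio relation $b_{k,j}/b^*_j=(x_{k,j}/x)^{1/r}$, rather than through the competitors' own values. This is what lets the single factor from Lemma~\ref{lem:n_undom} multiply the whole bracket.
\item Competitors bidding zero contribute $x_{k,j}=0$ and pay nothing, so they do not break the Jensen step, which allows zero shares.
\item The limit $g(x)\to0$ as $x\to1^-$ holds because the prefactor vanishes while the bracket stays bounded.
\end{itemize}
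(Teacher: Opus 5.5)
Your proposal is correct and matches the paper's proof: both write each competitor's bid as $b^*_j (x_{k,j}/x^*_j)^{1/r}$, apply Jensen to $\sum_{k\ne i^*} x_{k,j}^{1+1/r}$ using $\sum_{k\ne i^*} x_{k,j} = 1-x^*_j$, and then invoke Lemma~\ref{lem:n_undom} on $b^*_j$ to pull out the factor $\frac{r(1-x^*_j)}{r(1-x^*_j)+1}\,v^*_j$. Splitting off the optimal bidder's own payment first, rather than writing $S^{(j)}$ as a single expression factored through $b^*_j$, is only a cosmetic difference.
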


\begin{proof}
If $x^*_j = 1$, then $g(1) = 0$ and $S^{(j)} \ge 0$ holds trivially. Suppose $x^*_j \in (0, 1)$, so that $\sum_{k \ne i^*} b_{k,j} > 0$ and $S^{(j)} = \sum_{k=1}^n x_{k,j} b_{k,j}$. By Eq.~\eqref{eq:alloc}, for every $k \in [n]$ we have $\frac{x_{k,j}}{x^*_j} = \bigl(\frac{b_{k,j}}{b^*_j}\bigr)^r$, so $b_{k,j} = b^*_j \bigl(\frac{x_{k,j}}{x^*_j}\bigr)^{1/r}$. Substituting this relation into $S^{(j)}$ yields
\[
  S^{(j)} = \sum_{k=1}^n x_{k,j}^{1 + 1/r} \frac{b^*_j}{(x^*_j)^{1/r}} = b^*_j \left( x^*_j + \frac{\sum_{k \ne i^*} x_{k,j}^{1 + 1/r}}{(x^*_j)^{1/r}} \right).
\]
Since $z \mapsto z^{1 + 1/r}$ is convex on $[0, \infty)$ and $\sum_{k \ne i^*} x_{k,j} = 1 - x^*_j$, applying Jensen's inequality to the $n - 1$ terms $\{x_{k,j}\}_{k \ne i^*}$ gives
\[
  \sum_{k \ne i^*} x_{k,j}^{1 + 1/r} \ge (n - 1) \left( \frac{\sum_{k \ne i^*} x_{k,j}}{n - 1} \right)^{1 + 1/r} = \frac{(1 - x^*_j)^{1 + 1/r}}{(n - 1)^{1/r}}.
\]
Combining this inequality with Lemma~\ref{lem:n_undom} ($b^*_j \ge \frac{r(1 - x^*_j)}{r(1 - x^*_j) + 1} v^*_j$) yields $S^{(j)} \ge v^*_j g(x^*_j)$.
\end{proof}

Following \cite[Lemma~5.6]{liaw2023efficiency} and \cite{mehta2022auction}, we now take a convex combination of the optimal bidder's expected value and the aggregate expected spend to bound the Price of Anarchy in terms of $g(x)$.

\begin{lemma}
\label{lem:n_duality}
Let $c := \inf_{x \in (0, 1)} \frac{g(x)}{1 - x} > 0$. For any undominated bid profile of $\pFPA_r$, the liquid welfare satisfies $\LW(\OPT) \le \bigl(1 + \frac{1}{c}\bigr) \LW(\ALG)$, and thus
\begin{equation}
\label{eq:poa_c_bound}
  \PoA(\pFPA_r) \le 1 + \frac{1}{c}.
\end{equation}
\end{lemma}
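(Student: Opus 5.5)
We prove a pointwise bound on each query $j \in Q$ and then sum over queries, using only the ROS constraint $S_i \le V_i$. The target per-query inequality is
\[
  v^*_j \le x^*_j v^*_j + \frac{1}{c}\, S^{(j)},
\]
where $x^*_j v^*_j$ is the value the optimal bidder $i^*$ actually collects on query $j$.

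\textbf{Pointwise step.} If $v^*_j = 0$ the inequality is trivial. If $v^*_j > 0$, the argument in the proof of Lemma~\ref{lem:n_undom} shows $b^*_j > 0$, so $x^*_j \in (0,1]$.
\begin{itemize}
\item If $x^*_j = 1$, the left side equals $x^*_j v^*_j$ and the inequality holds since $S^{(j)} \ge 0$.
\item If $x^*_j \in (0,1)$, Lemma~\ref{lem:n_spend} gives $S^{(j)} \ge v^*_j\, g(x^*_j)$. The definition of $c$ gives $g(x^*_j) \ge c\,(1-x^*_j)$. Together, $(1 - x^*_j) v^*_j \le S^{(j)}/c$, which rearranges to the target inequality.
\end{itemize}

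\textbf{Summation step.} Summing over $j$ gives
\[
  \LW(\OPT) = \sum_j v^*_j \le \sum_j x^*_j v^*_j + \frac{1}{c}\sum_j S^{(j)}.
\]
For the first sum, $x^*_j v^*_j = x_{i^*,j} v_{i^*,j} \le \sum_i x_{i,j} v_{i,j}$, so $\sum_j x^*_j v^*_j \le \LW(\ALG)$. For the second sum, $\sum_j S^{(j)} = \sum_i S_i \le \sum_i V_i = \LW(\ALG)$ by the ROS constraint, which holds for undominated profiles by definition. Hence $\LW(\OPT) \le (1 + 1/c)\,\LW(\ALG)$. Since every equilibrium is undominated, taking the supremum over equilibria yields Eq.~\eqref{eq:poa_c_bound}.

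\textbf{Positivity of $c$.} This is the main obstacle. The function $g$ is continuous and positive on $(0,1)$, so $g(x)/(1-x)$ is positive there, and only the endpoints need care.
\begin{itemize}
\item As $x \to 1^-$, $g(x)/(1-x) = \frac{r}{r(1-x)+1}\left(x + \frac{(1-x)^{1+1/r}}{((n-1)x)^{1/r}}\right)$, which tends to $r > 0$.
\item As $x \to 0^+$, the term $\frac{(1-x)^{1+1/r}}{((n-1)x)^{1/r}}$ tends to $+\infty$, so $g(x)/(1-x) \to \infty$.
\end{itemize}
Thus $g(x)/(1-x)$ extends continuously to $(0,1]$ and blows up at $0$. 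It is therefore bounded below by a positive constant, so the infimum $c$ is attained or approached at a positive value, and $c > 0$.
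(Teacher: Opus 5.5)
Your proof is correct and is essentially the paper's argument. The paper uses the convex combination $\gamma\,\LW(\ALG) + \eta\,\LW(\ALG)$ with $\gamma = \frac{c}{c+1}$ and $\eta = \frac{1}{c+1}$ applied to the same two lower bounds, $\LW(\ALG) \ge \sum_j x^*_j v^*_j$ and $\LW(\ALG) \ge \sum_j S^{(j)} \ge \sum_j v^*_j g(x^*_j)$, which is your summed per-query inequality multiplied by $\gamma$; your explicit treatment of $v^*_j = 0$ and your endpoint-limit argument for $c > 0$ fill in details the paper's proof leaves implicit (positivity of $c$ is only established later, in Lemma~\ref{lem:n_root}).
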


\begin{proof}
Since each advertiser satisfies $V_i(\mathbf{b}) \ge S_i(\mathbf{b})$, the liquid welfare $\LW(\ALG) = \sum_{i=1}^n V_i(\mathbf{b})$ satisfies both $\LW(\ALG) \ge \sum_{j \in Q} x^*_j v^*_j$ and $\LW(\ALG) \ge \sum_{i=1}^n S_i(\mathbf{b}) = \sum_{j \in Q} S^{(j)} \ge \sum_{j \in Q} v^*_j g(x^*_j)$ by Lemma~\ref{lem:n_spend}. By definition of $c = \inf_{x \in (0, 1)} \frac{g(x)}{1 - x}$, for every $x \in (0, 1)$ we have $\frac{g(x)}{1 - x} \ge c$ and multiplying by $1 - x > 0$ gives $g(x) \ge c(1 - x)$; at $x = 1$, both sides equal $0$ ($g(1) = 0 = c(1 - 1)$), so $g(x^*_j) \ge c(1 - x^*_j)$ holds for all $x^*_j \in (0, 1]$. Setting $\gamma = \frac{c}{c + 1} \in (0, 1)$ and $\eta = \frac{1}{c + 1} = 1 - \gamma$, we have $\gamma = \eta c$. Taking the convex combination $\LW(\ALG) = \gamma\,\LW(\ALG) + \eta\,\LW(\ALG)$ yields
\begin{align*}
  \LW(\ALG)
  &\ge \sum_{j \in Q} v^*_j \bigl(\gamma x^*_j + \eta g(x^*_j)\bigr) \ge \sum_{j \in Q} v^*_j \bigl(\eta c x^*_j + \eta c(1 - x^*_j)\bigr) \\
  &= \eta c \sum_{j \in Q} v^*_j = \gamma\,\LW(\OPT),
\end{align*}
so $\frac{\LW(\OPT)}{\LW(\ALG)} \le \frac{1}{\gamma} = \frac{\gamma + \eta}{\gamma} = 1 + \frac{1}{c}$.
\end{proof}

\begin{lemma}
\label{lem:n_root}
For any $n \ge 2$ and $r > 0$, the minimum $c = \inf_{x \in (0, 1)} \frac{g(x)}{1 - x}$ is the unique positive root of the characteristic equation
\begin{equation}
\label{eq:char_poly}
  (n - 1) c^{r + 1} + c - r = 0.
\end{equation}
In particular, when $r = 2n$, we have $c > 1 + \frac{1}{4n}$ for all $n \ge 2$.
\end{lemma}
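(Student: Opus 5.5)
The plan is to reparameterize $x$ so that the inner bracket and the prefactor of $g(x)/(1-x)$ become rational in a single variable. Then I minimize by a one-variable calculus argument and observe that the minimum value equals the minimizer itself.

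\textbf{Step 1: reparameterize.} Cancelling the factor $1-x$ gives
\[
\frac{g(x)}{1-x}=\frac{r}{r(1-x)+1}\Bigl(x+(1-x)\bigl(\tfrac{1-x}{(n-1)x}\bigr)^{1/r}\Bigr).
\]
I substitute $s=\bigl(\frac{1-x}{(n-1)x}\bigr)^{1/r}$. This is a strictly decreasing bijection from $(0,1)$ onto $(0,\infty)$, with $x=\frac{1}{1+(n-1)s^r}$ and $1-x=(n-1)s^r x$. The bracket becomes $x\bigl(1+(n-1)s^{r+1}\bigr)$, and $r(1-x)+1=x\bigl(1+(r+1)(n-1)s^r\bigr)$. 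The factors of $x$ cancel, so
\[
\varphi(s):=\frac{g(x)}{1-x}=\frac{r\bigl(1+(n-1)s^{r+1}\bigr)}{1+(r+1)(n-1)s^{r}},\qquad s\in(0,\infty).
\]
Hence $c=\inf_{s>0}\varphi(s)$.

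\textbf{Step 2: locate and evaluate the minimum.} At the ends of the range, $\varphi(s)\to r$ as $s\to0^+$ and $\varphi(s)\sim\frac{r}{r+1}s\to\infty$ as $s\to\infty$. Setting $\varphi'(s)=0$ and dividing by the common factor $(r+1)(n-1)s^{r-1}>0$ gives
\[
s\bigl(1+(r+1)(n-1)s^r\bigr)=r\bigl(1+(n-1)s^{r+1}\bigr),
\]
which simplifies to $h(s):=(n-1)s^{r+1}+s-r=0$.

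Since $h$ is strictly increasing on $(0,\infty)$, with $h(0)=-r<0$ and $h(s)\to\infty$, it has a unique positive root $s^*$, and $s^*<r$ because $h(r)>0$. The sign of $\varphi'$ is the sign of $h$, so $\varphi$ decreases on $(0,s^*)$ and increases on $(s^*,\infty)$. Thus $s^*$ is the global minimizer, and the value $\varphi(s^*)<r$ is attained in the interior, so it is not approached at the endpoints.

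To evaluate it, I substitute $r=s^*+(n-1)(s^*)^{r+1}$ into the numerator of $\varphi$. The numerator then equals $s^*\bigl(1+(r+1)(n-1)(s^*)^r\bigr)$, so $\varphi(s^*)=s^*$. Hence $c=s^*$ is the unique positive root of \eqref{eq:char_poly}.

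\textbf{Step 3: the case $r=2n$.} Because $h$ is increasing, it suffices to show $h(1+\frac{1}{4n})<0$. Using $(1+\frac{1}{4n})^{2n+1}\le e^{(2n+1)/(4n)}\le e^{5/8}<1.87$ for $n\ge2$,
\[
h\bigl(1+\tfrac{1}{4n}\bigr)<1.87(n-1)+1+\tfrac{1}{4n}-2n=-0.13n-0.87+\tfrac{1}{4n}<0 .
\]
Therefore $c>1+\frac{1}{4n}$. Combined with Lemma~\ref{lem:n_duality}, this gives $1+\frac1c<1+\frac{4n}{4n+1}=2-\frac{1}{4n+1}$.

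\textbf{Main obstacle.} I expect the main difficulty to be spotting the substitution in Step 1 that makes $\varphi$ rational and the identity $\varphi(s^*)=s^*$ in Step 2. Once those are in hand, the remaining steps are routine checks.
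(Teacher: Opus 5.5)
Your proposal is correct and follows essentially the paper's route. Your variable $s$ is exactly the paper's $z=A y^{1/r}$ with $y=(1-x)/x$, so working in $s$ from the start makes the critical-point equation coincide directly with $(n-1)s^{r+1}+s-r=0$, and your identity $\varphi(s^*)=s^*$ is the paper's $c=h(y^*)=z$; your numerical bound $(1+\frac{1}{4n})^{2n+1}<e^{5/8}<1.87$ is a minor variant of the paper's $1.125\sqrt{e}<1.855$ and works equally well for $n\ge 2$.
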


\begin{proof}
For $x \in (0, 1)$, let $y = \frac{1 - x}{x} \in (0, \infty)$ (so $x = \frac{1}{y + 1}$ and $1 - x = \frac{y}{y + 1}$) and $A := (n - 1)^{-1/r} > 0$. Dividing Eq.~\eqref{eq:n_spend_bound} by $1 - x > 0$ and multiplying numerator and denominator by $y + 1$, we can write $h(y) := \frac{g(x)}{1 - x}$ as
\[
  h(y) = \frac{r}{r(1 - x) + 1}\bigl(x + A(1 - x)y^{1/r}\bigr) = \frac{r\bigl(1 + A y^{1 + 1/r}\bigr)}{(r + 1)y + 1}.
\]
Differentiating $h(y)$ with respect to $y > 0$ using the quotient rule gives
\begin{align*}
  h'(y)
  &= \frac{(r + 1) A y^{1/r}\bigl[(r + 1)y + 1\bigr] - r(r + 1)\bigl(1 + A y^{1 + 1/r}\bigr)}{\bigl((r + 1)y + 1\bigr)^2} \\
  &= (r + 1) \frac{A y^{1/r}(y + 1) - r}{\bigl((r + 1)y + 1\bigr)^2}.
\end{align*}
Since $y \mapsto A y^{1/r}(y + 1)$ is strictly increasing from $0$ to $\infty$ on $(0, \infty)$, $h(y)$ attains its unique global minimum on $(0, \infty)$ at the unique point $y^* > 0$ satisfying $A (y^*)^{1/r}(y^* + 1) = r$. Defining $z := A(y^*)^{1/r} > 0$, this optimality condition gives $z(y^* + 1) = r$, i.e., $y^* + 1 = \frac{r}{z}$ and $A(y^*)^{1 + 1/r} = z y^*$. Substituting these identities into $h(y^*)$ yields
\[
  c = h(y^*) = \frac{r(1 + z y^*)}{(r + 1)y^* + 1} = \frac{z(y^* + 1)(1 + z y^*)}{z(y^* + 1)y^* + y^* + 1} = z.
\]
Since $c = z = (n - 1)^{-1/r}(y^*)^{1/r}$, we have $y^* = (n - 1)c^r$; substituting this into $c(y^* + 1) = r$ gives $(n - 1)c^{r + 1} + c - r = 0$.

Now set $r = 2n$ and consider $P(c) := (n - 1)c^{2n + 1} + c - 2n$, which is strictly increasing on $c > 0$. Evaluating $P$ at $c_0 := 1 + \frac{1}{4n}$, since $\bigl(1 + \frac{1}{4n}\bigr)^{4n} < e$ (so $\bigl(1 + \frac{1}{4n}\bigr)^{2n} < \sqrt{e}$) and $1 + \frac{1}{4n} \le \frac{9}{8} = 1.125$ for all $n \ge 2$, we have
\[
\begin{aligned}
  \left(1 + \frac{1}{4n}\right)^{2n + 1}
  &= \left(1 + \frac{1}{4n}\right)\left(1 + \frac{1}{4n}\right)^{2n} \\
  &< 1.125 \sqrt{e} < 1.855.
\end{aligned}
\]
Consequently, for all $n \ge 2$,
\[
\begin{aligned}
  P(c_0) &< 1.855(n - 1) + 1.125 - 2n \\
  &= -0.145 n - 0.730 < 0 = P(c),
\end{aligned}
\]
which implies $c > c_0 = 1 + \frac{1}{4n}$.
\end{proof}

Combining Lemma~\ref{lem:n_duality} and Lemma~\ref{lem:n_root} immediately yields our main multi-bidder theorem.

\begin{theorem}
\label{thm:n_bidders}
For any $n \ge 2$ bidders and any undominated bid profile (and hence any equilibrium), the $2n$-Proportional First-Price Auction ($\pFPA_{2n}$) achieves a Price of Anarchy of at most
\begin{equation}
\label{eq:n_bidders_poa}
  \PoA(\pFPA_{2n}) \le 1 + \frac{1}{c} < 1 + \frac{1}{1 + \frac{1}{4n}} = 2 - \frac{1}{4n + 1}.
\end{equation}
\end{theorem}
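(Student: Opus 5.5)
The plan is to combine Lemma~\ref{lem:n_duality} and Lemma~\ref{lem:n_root}, instantiated at $r = 2n$, and then finish with an algebraic identity.

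First, fix $n \ge 2$ and set $r = 2n$. Lemma~\ref{lem:n_duality} applies to any undominated bid profile of $\pFPA_r$ for any $r > 0$. It gives $\LW(\OPT) \le (1 + \frac{1}{c})\LW(\ALG)$, where $c = \inf_{x \in (0,1)} \frac{g(x)}{1-x}$.

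Second, Lemma~\ref{lem:n_root} identifies this $c$ as the unique positive root of $(n-1)c^{2n+1} + c - 2n = 0$, and shows $c > 1 + \frac{1}{4n}$. Since $t \mapsto 1 + \frac{1}{t}$ is strictly decreasing on $t > 0$, we get $1 + \frac{1}{c} < 1 + \frac{1}{1 + 1/(4n)}$.

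Third, I would simplify the right-hand side:
\[
1 + \frac{4n}{4n+1} = \frac{8n+1}{4n+1} = 2 - \frac{1}{4n+1}.
\]
This gives the ratio bound for every undominated profile.

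Finally, Definition~\ref{def:equilibrium} notes that every equilibrium is undominated, so taking the supremum over instances and equilibria yields the bound on $\PoA(\pFPA_{2n})$.

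There is no real obstacle, since the substantive work already sits in the two lemmas. The only point to check is that the strict inequality survives taking the supremum. The per-instance bound is $1 + \frac{1}{c}$ with $c$ independent of the instance, so the supremum is at most $1 + \frac{1}{c}$, and the strict inequality comes from comparing the constant $c$ with $1 + \frac{1}{4n}$.
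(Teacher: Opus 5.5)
Your proposal is correct and is essentially the paper's proof, which simply combines Lemma~\ref{lem:n_duality} (instantiated at $r = 2n$) with the bound $c > 1 + \frac{1}{4n}$ from Lemma~\ref{lem:n_root}. The rest of your argument fills in details the paper leaves implicit: the simplification $1 + \frac{4n}{4n+1} = \frac{8n+1}{4n+1} = 2 - \frac{1}{4n+1}$, the fact that every equilibrium is undominated, and the observation that $c$ does not depend on the instance, so the inequality stays strict after taking the supremum.
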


\bibliographystyle{plainnat}
\bibliography{refs}

\end{document}